\newtheorem{definition}{Definition}
\newtheorem{theorem}{Theorem}
\newtheorem{assumption}{Assumption}
\DeclareMathOperator{\Tr}{Tr}
\DeclareMathOperator*{\argmin}{arg\,min}
\def\BibTeX{{\rm B\kern-.05em{\sc i\kern-.025em b}\kern-.08em
    T\kern-.1667em\lower.7ex\hbox{E}\kern-.125emX}}
\begin{document}

\title{LAWS: Look Around and Warm-Start Natural Gradient Descent for Quantum Neural Networks\\
}

\author{
\IEEEauthorblockN{Zeyi Tao, Jindi Wu, Qi Xia and Qun Li}
\IEEEauthorblockA{\textit{Department of Computer Science} \\
\textit{William \& Mary}\\
Williamsburg, VA, USA \\
\{ztao, jwu21, qxia01, liqun\}@cs.wm.edu}
}

\maketitle

\begin{abstract}
Variational quantum algorithms (VQAs) have recently received significant attention from the research community due to their promising performance in Noisy Intermediate-Scale Quantum computers (NISQ).
However, VQAs run on parameterized quantum circuits (PQC) with randomly initialized parameters are characterized by barren plateaus (BP) where the gradient vanishes exponentially in the number of qubits.
In this paper, we first review quantum natural gradient (QNG), which is one of the most popular algorithms used in VQA, from the classical first-order optimization point of view.
Then, we proposed a \underline{L}ook \underline{A}round \underline{W}arm-\underline{S}tart QNG (LAWS) algorithm to mitigate the widespread existing BP issues.
LAWS is a combinatorial optimization strategy taking advantage of model parameter initialization and fast convergence of QNG.
LAWS repeatedly reinitializes parameter search space for the next iteration parameter update.
The reinitialized parameter search space is carefully chosen by sampling the gradient close to the current optimal.
Moreover, we present a unified framework (WS-SGD) for integrating parameter initialization techniques into the optimizer.
We provide the convergence proof of the proposed framework for both convex and non-convex objective functions based on Polyak-Lojasiewicz (PL) condition.
Our experiment results show that the proposed algorithm could mitigate the BP and have better generalization ability in quantum classification problems.
\end{abstract}

\begin{IEEEkeywords}
Variational Quantum Algorithms, Optimization, Natural Gradient Descent
\end{IEEEkeywords}
\section{Introduction}\label{sec:introduction}
The emergence of the Noisy Intermediate-Scale Quantum (NISQ)~\cite{preskill2018quantum} technology has demonstrated its enormous potential in number factorization~\cite{shor1994algorithms}, quantum system simulation~\cite{lloyd1996universal}, or solving linear systems of equations~\cite{harrow2009quantum}.
Current state-of-the-art NISQ with 50-100 qubits may be able to perform tasks that outperform the capabilities of today’s classical computers~\cite{arute2019quantum}.
However, NISQ is limited by connectivity, qubit count, and gate fidelity, preventing the use of quantum error correction and making many quantum algorithms impractical~\cite{cerezo2021variational}.
 To this end, one of the most promising computational models for using near-term quantum computers is proposed so-called Variational Quantum Algorithms (VQAs)~\cite{mcclean2016theory}.

The VQA is a quantum-classical hybrid algorithm.
In VQA, a task of interest is prepared and evaluated via a parameterized quantum circuit (PQC)  on a quantum computer, with variationally updating the parameters by a classical optimizer to find the optimum of some measurable cost function.
The applications of VQA include the Variational Quantum Eigensolver (VQE)~\cite{peruzzo2014variational}, Quantum Approximate Optimization Algorithm (QAOA)~\cite{farhi2014quantum}, and Quantum Neural Networks (QNNs)~\cite{farhi2018classification}.
The success of VQA is due to 1) VQA allows task-oriented programming making the design of quantum algorithms efficient~\cite{cerezo2021cost}; 2) compared to the classical neural network, the expressibility of QNN is more significant even with shallow quantum circuits~\cite{tilly2021variational};
This low complexity in QNN mitigates the NISQ limitations.

Although it has been shown that the optimization task for minimizing the VQA cost function is, in general, an NP-hard problem~\cite{bittel2021training}, the effectiveness and efficiency of gradient-based optimizers are still charming.
Many employ gradient-based optimizers as a backbone in VQA.
For example, one uses gradient descent to reach the ground-state energy under a Hamiltonian in VQE study.
Alternatively, for some variational classifiers, one uses stochastic gradient descent (SGD) to find the optimal PQC model.
As a result, the performance of QNNs heavily depends on the power of such a classical optimizer.
In particular, the quantum natural gradient (QNG)~\cite{stokes2020quantum} has drawn much research attentions~\cite{yamamoto2019natural,koczor2019quantum} due to its extraordinary ability to discover the parameter space's geometric structure.
Further, QNG has been proved in Ref.~\cite{carleo2017solving} that the VQE associated with QNG is equivalent to the imaginary time evolution~\cite{mcardle2018variational} when the quantum Fubini-Study metric is applied to measure the geometric structure, making it more widespread.

However, a recently discovered phenomenon, so-called barren plateaus (BP)~\cite{mcclean2018barren}, where gradients of the cost functions vanish exponentially with the size of the system, dramatically limits the application of QNNs to practical problems.
BP prevents PQC's parameter update from gradient changes when using gradient-based optimizers.
To acquire the gradient information, exponential resources might be used for sampling errors in quantum measurements.

To address the BP issue, gradient rescaling~\cite{suzuki2021normalized,haug2021optimal}, PQC's parameter initialization~\cite{grant2019initialization,liu2021parameter}, and gradient-free optimizations~\cite{franken2020gradient} have been extensively studied.
Our work is also motivated by addressing the BP issue.
In this paper, we first review the gradient-based method, particularly QNG, in the view of mirror descent~\cite{Nemirovskii1983}. 
Then, we proposed a look around warm-start QNG (LAWS) algorithm as a primary instrument to mitigate the BP issue.
The proposed algorithm is based on two observations:
First, it has been reported in Ref.~\cite{wierichs2020avoiding} that the QNG can consistently find a global optimum and requires significantly fewer epochs than other optimizers.
This outperformance holds even for large system sizes (40 qubits), indicating that using QNG to solve the QNN problem is suitable.
Second, the success of applying parameter initialization in QNN reported in Ref.~\cite{grant2019initialization} demonstrates a potential direction for mitigating the BP issue, where it withstand the possible failure of using the gradient-based~\cite{cerezo2020impact} or gradient-free~\cite{arrasmith2021effect} algorithm.
In addition, many classical machine (deep) learning models benefit from parameter initialization strategies and gain performance improvement~\cite{sutskever2013importance}.
Based on the above, the intuition behind LAWS is that we repeatedly reinitialize the PQC's parameter while in training. 
We call this reinitialization during the training as warm-start.
Essentially, we perform parameter initialization for every current optimum until some criteria meet (i.e., the value of the cost function is minimized).
In this way, the fast convergence speed of QNG is adopted, and the BP could be mitigated via multiple parameter reinitializations.
However, designing such an algorithm is non-trivial and challenging. 
The question, for example, of how to perform reinitialization while in training and how to reinitialization should be carefully treated.
More motivations, discussions, and implementation details are present in section~\ref{sec:main}.

In summary, the contributions of this paper are following:
\begin{itemize}
    \item First, we propose a new derivation of QNG by using a classical first-order optimization scheme known as mirror descent.
    \item Second, we proposed a new algorithm named LAWS for solving VQE and QNN in general. 
    Our experiment results show that the proposed algorithm could mitigate the BP issue and have better generalization ability in quantum classification problems.
    \item Third, based on LAWS, we propose a unified framework WSSGD for the warm-start gradient descent algorithm that is easy to implement and compatible with the most current quantum learning libraries.
    \item Last, as a complementary part, we provide the convergence proof of the proposed framework for both convex and non-convex objective functions.
\end{itemize}



\section{Related Work}\label{sec:related}
In this section, we first introduce some notation used in the paper, then we summarize the recent studies in optimizing QPC and the presence of BP issue, which has been considered a major limitation in VQA.
\subsection{Notations}
For $\theta, \mu \in R^d$, let $\sqrt{\theta}, \theta\odot \mu$, and $\theta/\mu$ denote the element-wise square root, multiplication, and division of the vectors.
The $\Vert \theta\Vert_2^2$ is $l_2$-norm.
We denote $\theta_{k}^t$ for parameter $\theta$ at $t$-th iteration $k$-th step. 

\subsection{Optimization in VQA}
Hybrid quantum-classical optimization performed via parameterized quantum circuits is a promsing approach for various new emerging quantum based application.
A classical optimization scheme is utilized to update the parameters of such hybrid quantum-classical model.
Two types of optimization are often used: gradient-based methods and gradient-free methods.

\textit{Gradient-based Optimization} has been widely adopted in solving QPC such as stochastic gradient descent (SGD)~\cite{harrow2021low}.
SGD replaces the exact partial derivative with an unbiased gradient estimator at each optimization step.
SGD is a promising method for almost all large-scale machine learning models, where it has been found to be efficient for gradient evaluation, fast convergence speed, etc.
Properly choosing the step size of SGD is essential.
In a recent study~\cite{suzuki2021normalized}, the Vanilla SGD has been replaced with many modern adaptive learning rate methods such as AdaGrad~\cite{duchi2011adaptive}, Adam~\cite{kingma2014adam}, and AdaBelief~\cite{zhuang2020adabelief} to achieve even faster convergence speed and better performance.

The natural gradient~\cite{amari1998natural} (NG) automatically chooses gradient step size and moves in the steepest descent direction with respect to the Fisher information.
The pioneering work~\cite{stokes2020quantum} propose QNG as part of a general-purpose optimization framework for variational quantum algorithms.
Later,~\cite{yamamoto2019natural} demonstrates some simple case studies for QNG via variational quantum eigensolver to reveal how the natural gradient optimizer uses the geometric property to change and improve the ordinary gradient method.
It has been reported~\cite{koczor2019quantum,wierichs2020avoiding} QNG could effectively avoid the local optimal and be stable of performance on all considered system sizes.
QNG's computation is expensive; hence it becomes an obstacle for applying in both classical learning and VQA.
A recent technical note~\cite{jones2020efficient} presents a time-efficient QNG method~\cite{soori2021tengrad} to compute the inverse of the quantum Fisher information matrix.

\textit{Gradient-free Optimization} such as the Nelder-Mead algorithm~\cite{nelder1965simplex}, Powell algorithm~\cite{powell1964efficient}, and Constrained Optimization BY Linear Approximations is also welcome for optimizing PQC models.
~\cite{franken2020gradient} proposes the first gradient-free quantum optimization for NISQ device as a detour solution for BP.
However,~\cite{arrasmith2021effect} reports that gradient-free optimizers do not solve the BP problem due to the exponentially large resources demand.

\subsection{Mitigating the Barren Plateau}\label{subsec:bp}
The barren plateau (BP) phenomenon in the cost function landscape was originally discovered in~\cite{mcclean2018barren} where it was shown that deep (unstructured) parametrized quantum circuits exhibit BPs when randomly initialized.
When a given cost function exhibits a BP, the magnitude of its partial derivatives will be, on average, exponentially vanishing with the system size~\cite{cerezo2021variational}.
Thus, BP has been recognized as a well-known bottleneck in VQA~\cite{cerezo2020impact}, especially when optimizing the QNNs using the gradient-based method.
~\cite{cerezo2020impact} demonstrates that even using high-order derivative information such as the Hessian, the exponential scaling associated with BP still exists.

Many works have been studied to mitigate BP, and they can be roughly categorized into two directions.
The first type of approach uses problem-inspired ansatzes because problem-agnostic ansatzes, such as deep hardware efficient ansatzes, could exhibit barren plateaus due to their high expressibility~\cite{cerezo2021cost,holmes2022connecting}.
The approach, for example~\cite{cerezo2021cost}, relaxes search space during the optimization to a smaller space that contains the solution to the problem or that at least contains a good approximation to the solution while maintaining a low expressibility.

Another line of study focus on QNN initialization.
Parameter initialization has been proved to be helpful in classical machine learning.
In~\cite{grant2019initialization}, the proposed method uses the identity block strategy to limit the effective depth of the circuits used to calculate the first parameter update to avoid the QNN being stuck in a barren plateau at the start of training. 
Further,~\cite{liu2021parameter} proposes a parameter initialization strategy that transfers the small pre-trained layer blocks to the target model stacking by multiple identical basic blocks.
This idea is based on transfer learning.
The empirical results show that the gradient norm's variance is scaled and prove it is effective for mitigating BP where the trainability of QNNs is improved.

The method~\cite{haug2021optimal} uses adaptive learning rates induced from Gaussian kernels for the gradient method to avoid the gradient vanishing.
Work~\cite{larocca2021diagnosing} analyzes the existence of barren plateaus in various ansatzes, and sheds light on the role of the different initial states causing the presence or absence of barren plateaus.
Then they provide an efficient framework for trainability-aware ansatz design strategies.
\section{Background}\label{sec:preliminary}
\subsection{Variational Quantum Algorithms}
The VQA is a quantum-classical hybrid algorithm that enables noisy quantum devices to work with the help of classical computers. 
The VQA runs a parameterized quantum circuit for ansatz preparation and expectation value measurement on a quantum computer. 
Meanwhile, VQA variationally updates the parameters by a classical optimizer to find a global minimum of the objective function.
Generally, the PQC with unitary $U (\theta)$ having the form
\begin{equation}\label{eq:unitary}
    U( \theta) = \prod_{l=1}^{L} U_l( \theta_l)
\end{equation}
where
\begin{equation}
    U_l( \theta_l) = \prod_{m=1}^{M}e^{-i\theta_{l,m} V_m / 2} W_{l,m}
\end{equation}
Here, the $l$ indicates the layer where $ \theta_l = (\theta_{l, 1},\cdots, \theta_{l, M})$ contains the parameter of such layer and $ \theta = \{ \theta_l \}_{l=1}^{L}$. 
$V_m$ is Hermition operator that generates the unitary in the ansatz.
In addition, $W_{l,m}$ is the unparameterized quantum gate.
One of the widely used ansatzes in quantum chemistry, optimization, and quantum simulation is hamiltonian variational ansatz (HVA) which aims to prepare a trial ground state for a given Hamiltonian $H = \sum V_m$ (here $V_m$ are Hermitian operators, usual Pauli strings) by Trotterizing an adiabatic state preparation process~\cite{wecker2015progress}. The unitary of HVA, therefore, is given by $U(\theta) = \prod_{l}\big(\prod_{m}\text{exp}(-i\theta_{l,m}V_{l,m})\big)$ in the form of Euqation~\ref{eq:unitary}.
Without loss of generality, the cost of VQA can be expressed as
\begin{equation}
    \mathcal C( \theta) = f\big ( \{\rho_k\}, \{\mathcal O_k\}, U( \theta) \big)
\end{equation}
where $f$ is some function, $\{\rho_k\}$ are input states, $\{\mathcal O_k\}$ is a set of observables, and $U(\theta)$ is parameterized unitary defined in Equation~\ref{eq:unitary}.
To explicitly define function $f$, one could have cost in the form
\begin{equation}
    \mathcal C( \theta) = \sum_{k=1}^{K} a_k \Tr\big [ U( \theta) \rho_k U^{\dagger}( \theta) \mathcal O_k \big ]
\end{equation}
where $a_k$ is the coefficients of the linear combination of expectation values. 
The subscript $k$ indicates the fact that one could work with different functions for each state in the training set.
Particularly, in variational quantum eigensolver, one chooses $\mathcal O = H$, where $H$ is the physical Hamiltonian and lets training set with size $K = 1$. The cost function becomes the expectation value of the Hamiltonian $H$
\begin{equation}\label{eq:costvqe}
    \mathcal C( \theta) = \Tr\big [ U( \theta) \rho_0 U^{\dagger}( \theta) H \big ]
\end{equation}
where $\rho_0 = |  0 \rangle \langle  0 |$ is the initial (pure) state.
In the above Equation~\ref{eq:costvqe}, let $| \psi ( \theta) \rangle = U ( \theta)| 0\rangle$.
For simplicity, we write $| \psi ( \theta) \rangle$ as $| \psi \rangle $ henceforth.
Given a $N$-dimensional complex Hilbert space $\mathbb C^{N}$, we define a projector $P_{\psi}$ as $| \psi \rangle \langle \psi | \in \mathbb C \mathbb P^{N-1}$.
Now we consider the following optimization problem
\begin{equation}\label{eq:generalcost}
    \min_{ \theta} \mathcal C( \theta)
\end{equation}
where
\begin{equation}\label{eq:vqe}
    \mathcal C( \theta) = \Tr(P_{\psi} H) = \langle \psi | H | \psi \rangle
\end{equation}
Note that $\psi$ is normalized since $U( \theta)$ is unitary.

\subsection{First-order Optimization}\label{subsec:firstorder}
Globally optimizing the objective function $\mathcal C( \theta)$ is impractical due to
the nonconvexity.
To this end, practitioners search for local optima by solving the following dynamical system
\begin{equation}\label{eq:basic}
    \theta_{t+1} = \argmin_{\theta} \big \{ \langle \theta, \nabla \mathcal C(\theta_t) \rangle + \frac{1}{2\eta} \Vert \theta - \theta_t \Vert_2^2 \big \}
\end{equation}
which is equivalent to the gradient descent in the form $\theta_{t+1} = \theta_t - \eta \nabla \mathcal C(\theta_t)$. 
Notice, the stochastic gradient descent (SGD) is obtained when $g_t = \nabla \mathcal C(\theta_t, \xi)$ where $\xi$ is a sample drawn from dataset $\mathcal D$ such that $\mathbb E [g_t] = \nabla \mathcal C(\theta_t)$ is a unbaised estimator of $ \nabla \mathcal C(\theta_t)$ and
\begin{equation}\label{eq:basicsgd}
    \theta_{t+1} = \theta_t - \eta g_t
\end{equation}
Optimization problem~\ref{eq:basic} or~\ref{eq:basicsgd} is well-suited to assumptions regarding the objective function $\mathcal C$ which involve the Euclidean norm.
Th intuition behind optimization in Equation~\ref{eq:basic} is objective function $\mathcal C$ is replaced by its linearization at $\theta_t$ plus a Euclidean distance term $\frac{1}{2\eta}\Vert \theta - \theta_t \Vert_2^2$, which prevents the next iterate $\theta_{t+1}$ from being too far from $\theta_t$.

Instead of using Vanilla SGD above, recent studies tackle the optimization problem~\ref{eq:vqe} by using natural gradient descent, where we update the parameter as
\begin{equation}\label{eq:ngd}
    \theta_{t+1} = \theta_t - \eta F(\theta)^{-1}g_t
\end{equation}
Here, $F(\theta) = \Re[G(\theta)]$ is Fubini-Study metric tensor a $P\times P$ matrix recently identified as the (classical) Fisher information matrix.
We define quantum geometric tensor $G(\theta)$ as
\begin{equation}
    G_{i,j} = \Big\langle\frac{\partial \psi}{\partial \theta_i}, \frac{\partial \psi}{\partial \theta_j}\Big\rangle -  \Big\langle\frac{\partial \psi}{\partial \theta_i}, \psi \Big\rangle \Big\langle \psi, \frac{\partial \psi}{\partial \theta_j}\Big\rangle
\end{equation}
Seminar work~\cite{stokes2020quantum} demonstrates the block-wise Fubini-Study metric tensor can be evaluated in terms of quantum expectation values of Hermitian observables which is thus experimentally realizable.
To evaluate quantum geometric tensor,~\cite{van2021measurement} reports that it requires total $\mathcal O(P^2)$ rounds of sampling.
Each round involves a repeated but parallelizable evaluation of an ansatz circuit with $\mathcal O(P)$ gates.
When the global phase of the ansatz state $|\psi\rangle$ is independent of the parameters, the second term of $G_{i,j} $ vanish and $G(\theta)$ becomes equivalent to the coefficient matrix in imaginary time evolution~\cite{mcardle2018variational,yuan2019theory}.
The advantage of using QNG in VQE is that it can easily measure the  distinguishability of the objective function in non-Euclidean parameter space.

\subsection{Barren Plateau Problem}
The gradient ($\nabla C(\theta)$) or stochastic gradient ($g$) plays an essential role in the parameter optimization process via the gradient-based method.
However, it has been shown recently that the cost function exhibits a BP where the cost function gradient vanishes exponentially with the system size (the number of qubits).
This phenomenon is also identified as cost concentration in~\cite{arrasmith2021equivalence}.
Here, we consider here the following generic definition of a barren plateau without loss of generality.
\begin{definition}
(Barren Plateau). Consider the cost function $\mathcal C(\theta)$ defined in Eq~\ref{eq:generalcost}. This cost exhibits a barren plateau if, for all $\theta_i \in  \theta$, the expectation value of partial derivative $\partial_i\mathcal C_i(\theta) = \partial\mathcal C(\theta) / \partial \theta_i$ respect to the cost function is zero i.e., $\mathbb E[\partial_i\mathcal C_i(\theta)] = 0$.
The variance of the above partial derivative vanishes exponentially with the number of qubits, i.e,
\begin{equation}
    \text{Var}_{\theta}[\partial_i\mathcal C_i(\theta)] \in \mathcal O(p^{-n})
\end{equation}
for some $p > 1$.
\end{definition}
Notice, we have the following conclusion by using Chebyshev’s inequality
\begin{equation}
    P(|\partial_i\mathcal C_i(\theta)| \geq c) \leq \frac{\text{Var}_{\theta}[\partial_i\mathcal C_i(\theta)]}{c^2} 
\end{equation}
for some constant $c$.
The definition and above inequality tells that the probability of finding a $\partial_i\mathcal C_i(\theta)$ that is larger than $c$ decreases exponentially when the variance of the partial derivative establishes an exponential decay.
The presence of BPs exists in both deep unstructured PQC with randomly initialized parameters~\cite{mcclean2018barren} and QNNs~\cite{grant2019initialization}.
Ref.~\cite{grant2019initialization} theoretically analysis the BP based on the fact that when ansatzes become unitary 2-designs~\cite{harrow2009random}, the expected number of samples required to estimate $\partial C(\theta)$ is exponential in the system size which often refers the number of qubits $n$.
A circuit forms unitary 2-design means that 
the distribution matches up to the second moment that of the uniform Haar distribution of unitaries, which will be used in the analysis of $\partial_i C(\theta)$).
For example the partial derivative of $\theta_k$ in $l$-th layer respect to cost function in~\ref{eq:vqe} can be explicitly described as
\begin{equation}
    \partial_k \mathcal C(\theta) = i \langle 0| U_{-}^{\dagger} [V_k U_{+}^{\dagger}H U_{+}] U_{-}| 0 \rangle
\end{equation}
where $U_{-} = \prod_{l=k-1}^{1} U_{l}(\theta_l)$ and $U_{+} = \prod_{l=L}^{k} U_{l}(\theta_l)$.
BP is fatal in gradient-based optimization because it might halt the parameter update and quickly converge to some sub-optimal solution.
In other words, one needs exponential resources to sample the gradient. 
Therefore, optimizing parameters in the BP region with the gradient-based method becomes hard.

\section{Main Results}~\label{sec:main}
\begin{figure}[t]
\centering
\includegraphics[width=.75\columnwidth]{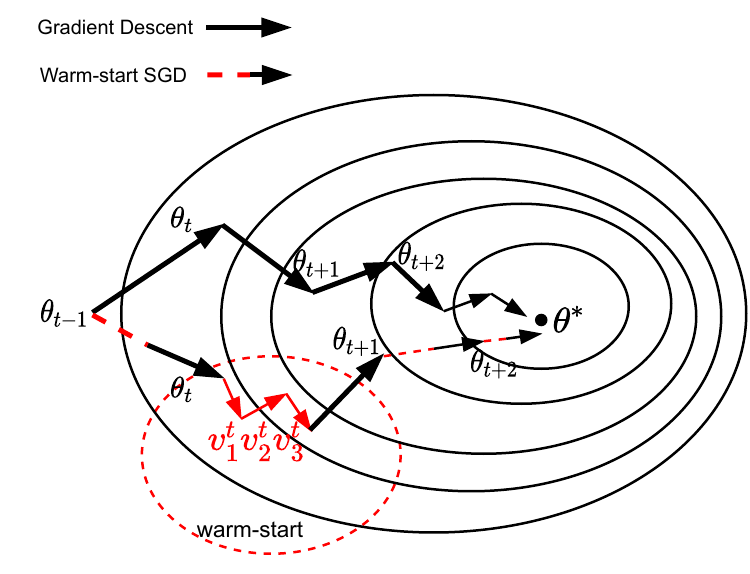}
\caption{A demonstration of gradient trajectory of gradient descent and warm-start QNG. Circle in red indicates the parameter re-initialization for next step parameter update.}
\label{fig:trajectory}
\end{figure}
In this section, we show that QNG corresponding to quantum probability space can be implemented as a classical first-order optimization known as mirror descent.
Then we show the proposed LAWS algorithm and a general WSSGD framework.
\subsection{Quantum Information Geometry of Mirror Descent}
In the seminar work~\cite{Nemirovskii1983}, the Euclidean distance term $\frac{1}{2\eta}\Vert \theta - \theta_t \Vert_2^2$ in Equation~\ref{eq:basic} has been replaced with a general distance function $D_{\Phi}(\cdot, \cdot)$, i.e.,
\begin{equation}\label{eq:bregman}
    D_{\Phi} (\theta_1, \theta_2) = \Phi(\theta_1) - \Phi(\theta_2) - \langle \nabla \Phi(\theta_2), \theta_1 - \theta_2 \rangle
\end{equation}
where $\Phi (\cdot)$ is a carefully chosen continuously differentiable, strictly convex proximity function defined on some convex set.
Notice, $D_{\Phi}(\theta_1, \theta_2) \geq 0$ with $D_{\Phi}(\theta_1, \theta_1) = 0$.
$D_{\Phi}(\cdot, \cdot)$ defined above is also known as \textit{Bregman divergence}, which is  widely used in statistical inference, optimization, machine learning, and information geometry.
As a result, a generalization of stochastic iterative optimization~\ref{eq:basic} has following
\begin{equation}\label{eq:bregmanoptim}
    \theta_{t+1} = \argmin_{\theta} \big \{ \langle \theta, g_t \rangle + \frac{1}{\eta} D_{\Phi}(\theta, \theta_t) \big \}
\end{equation}
The above optimization is known as mirror descent (MD)~\cite{nemirovski2009robust} with proximity function $D_{\Phi}$.
Note, if $\Phi(\theta) = \frac{1}{2}\Vert \theta \Vert_2^2$ convex, then $D_{\Phi}(\theta, \theta_t)= \frac{1}{2}\Vert \theta - \theta_t \Vert_2^2$ yields the standard gradient descent update~\ref{eq:basicsgd}.
In addition, many modern machine learning optimizations such as Vanilla SGD, AdaGrad and Adam~\cite{kingma2014adam} fall into MD~\ref{eq:bregmanoptim} point view.
For example, given Mahalanobis distance $\Phi(\theta) = \theta^{\top} A \theta$ where $A \succ 0$ is a positive (semi)definite matrix, i.e., $A = \sqrt{\sum_{i=1}^{t} g_i^2}$ a sum of all gradients for $t=1$ to $t$.
We have AdaGrad
\begin{equation}
\small
    \theta_{t+1} = \argmin_{\theta} \big \{ \langle \theta, g_t \rangle + \frac{1}{2\eta} (\theta - \theta_t)^{\top} A (\theta - \theta_t)\big \}
\end{equation}
which is equivalent to
\begin{equation}
    \theta_{t+1} = \theta_t - \frac{\eta}{\sqrt{\sum_{i=1}^{t} g_i^2} + \epsilon} \odot g_t
\end{equation}
where $\epsilon$ is a small number, typically set as $10^{-8}$, $\odot$ indicates the element-wise product.
Moreover, if
\begin{equation}
\small
    A = \sqrt{(1 - \beta_2)\sum_{i=1}^{t} \beta_2^{t-i}g_i^2}
\end{equation}
and set $m_t = \beta_1 m_{t-1} + (1 -\beta_1) g_t$ as exponential moving average (EMA) of stochastic gradient $g_t$ with $\beta_1, \beta_2 \in \mathbb R$ (typical values are $\beta_1 = 0.9$ $\beta_2 = 0.999$).
We recover the Adam optimizer
\begin{equation}
\small
    \theta_{t+1} = \theta_t - \frac{\eta}{\sqrt{(1 - \beta_2)\sum_{i=1}^{t} \beta_2^{t-i}g_i^2} + \epsilon}\odot m_t
\end{equation}

In VQA, consider a parametric family of strictly positive probability distributions $p_{\theta}(x)$ parametrized by $\theta \in \mathbb R^d$ where $x\in [N]$ is a set of probability distributions on $N$ elements $[N] = \{1, \cdots, N\}$ and satisfies the normalization condition
\begin{equation}
    \int p_{\theta}(x) dx = 1 \text{ for all } \theta
\end{equation}
Assuming sufficient regularity, the derivatives of such densities satisfy the identity
\begin{equation}\label{eq:firstorderzero}
    \forall t > 0 \quad \int \frac{\partial^t p_{\theta}(x)}{\partial \theta^t} dx = \frac{\partial^t}{\partial \theta^t} \int p_{\theta}(x) dx = \frac{\partial^t 1}{\partial \theta^t} = 0
\end{equation}
To elucidate the geometry of the probability space $P$, we measure the density $p_{\theta}$ changes when
one adds a small quantity $d\theta$ to its parameter. 
It can be achieved in a statistically meaningful way by using the Kullback-Leibler (KL) divergence~\cite{bottou2018optimization}.
Interestingly, KL-divergence is also a instance of Bregman divergence mentioned in Equation~\ref{eq:bregman} by letting proximity function $\Phi(\theta) = \sum_i \theta_i \log (\theta_i)$ result in
\begin{equation}
\small
    D_{\Phi}(\theta, \theta + d\theta) = KL(\theta \Vert \theta+d\theta) = \mathbb E_{p_{\theta}}\Bigg [ \log \bigg ( \frac{p_{\theta}(x)}{p_{\theta + d\theta}(x)}\bigg)\Bigg]
\end{equation}
where $\mathbb E_{p_{\theta}}$ denotes the expectation with respect to the distribution $p_{\theta}$.
Further, we can approximate the divergence with a second-order Taylor expansion such as
\begin{equation}
\small
\begin{split}
    & KL(\theta \Vert \theta+d\theta) = \mathbb E_{p_{\theta}}\big[ \log(p_{\theta}(x)) - \log(p_{\theta + d\theta}(x)) \big]\\
    & \approx -d\theta^{\top} \mathbb E_{p_{\theta}}\Big[ \frac{\partial \log(p_{\theta}(x))}{\partial \theta} \Big] + \frac{1}{2}d\theta^{\top} \mathbb E_{p_{\theta}}\Big[ \frac{\partial^2 \log(p_{\theta}(x))}{\partial \theta^2}\Big] d\theta\\
\end{split}
\end{equation}
Applying the fact that first-order term is 0 shown in Equation~\ref{eq:firstorderzero}, we have
\begin{equation}
    D_{\Phi}(\theta, \theta + d\theta) = KL(\theta \Vert \theta+d\theta) \approx \frac{1}{2}d\theta^{\top} F(\theta) d\theta
\end{equation}
$F(\theta)$ is defined by the Fisher information matrix (FIM)
\begin{equation}
\begin{split}
        F(\theta) &= \mathbb E_{p_{\theta}}\Big[ \frac{\partial^2 \log(p_{\theta}(x))}{\partial \theta^2}\Big]\\
        &= \mathbb E_{p_{\theta}}\Bigg[ \bigg(\frac{\partial\log(p_{\theta}(x))}{\partial \theta}\bigg)\bigg(\frac{\partial\log(p_{\theta}(x))}{\partial \theta}\bigg)\Bigg]
\end{split}
\end{equation}
We notice the second equality of $F(\theta)$ is often preferred because it makes clear that the $F(\theta)$ is symmetric and always positive semidefinite, though not necessarily positive definite.
Finally, we plug the Bregman divergence defined on information entropy $\Phi(\theta) = \sum_i \theta_i \log (\theta_i)$ in MD optimization~\ref{eq:bregmanoptim}, and we have
\begin{equation}\label{eq:mdfisher}
    \theta_{t+1} = \argmin_{\theta} \big \{ \langle \theta, g_t \rangle + \frac{1}{2\eta} (\theta - \theta_t)^{\top} F (\theta - \theta_t)\big \}
\end{equation}
The iterative solution of the above optimization problem~\ref{eq:mdfisher} is
\begin{equation}
    \theta_{t+1} = \theta_t - \eta F^{-1}g_t
\end{equation}
where $F^{-1}$ is the pseudo-inverse of the Fisher information matrix, which recovers the natural gradient descent in~\ref{eq:ngd}.
In the over-parameterized classical deep learning model, $F$ is singular.
To make it invertible, one often adds a non-negative damping term $\delta$ such that $ \theta_{t+1} = \theta_t - \eta (F + \delta I)^{-1}g_t$.

\subsection{Look Around Warm-start Natural Gradient}
The presence of BP becomes one of the major bottlenecks in optimizing VQA, such as deep QNN.
Notably, this does not preclude VQA, allowing for efficient gradient-based optimization.
In section~\ref{subsec:bp}, we discuss two mainstream techniques to mitigate BP.
This work focuses on the optimization solution combined with the QNN parameter initialization strategy.
\subsubsection{Motivation}
Many studies in both classical and quantum regimes have shown that the parameter initialization could significantly improve the model performance and accelerate the training.
Intuitively, a good parameter initialization (i.e., the distribution of initialized parameter close to optimal) requires a large amount of empirical studies, hyper-parameter tuning, and possibly human intervention, which is unproductive.
Therefore, a natural question is raised: \textit{can we perform efficient and effective parameter initialization for QNNs?}
This is the primary motivation behind our approach.
Second, the one-shot model initialization strategy initializes the model only at the beginning of the training process.
However, as the training process proceeds,  BP appears again when we use the gradient-based method to train the model.
So, the questions, such as \textit{can we design a hybrid method that effectively trains QNNs while adopting the superiority of parameter initialization}, or \textit{can we periodically perform parameter initialization during the optimizing phase to mitigate the BP?} are also motivating us to explore various optimization strategies for QNNs.
Besides finding a suitable initialization strategy, we also consider the algorithm's efficiency since computing quantum Fisher information in QNG is expensive, as discussed in section~\ref{subsec:firstorder}.

\subsubsection{Proposed Method}
Here, we present our proposed algorithm shown in Algorithm~\ref{alg:wsngd}.
The key step in the proposed algorithm, in short, is that we perform the initialization after every parameter update instead of only initializing the PQC one at a time.  
The intuition behind this algorithm is that we try to warm-start the natural gradient descent for each iteration. 
Every time the optimizer finds a sub-optimal solution, say $\theta_t$, we utilize this $\theta_t$ and re-initialize the model around $\theta_t$ within a small region.
In this way, we couple the parameter re-initialization and gradient descent method together, and the LAWS finds the (maybe) optimal solution in a ``\textit{look around}" manner.
Later, we generalized the LAWS to accommodate all existing gradient-based methods in Algorithm~\ref{alg:general}.

There are two major advantages when using LAWS.
First,  LAWS could mitigate the BP issue by repeatedly performing the parameter re-initialization, where our empirical results also support this observation.
Second, LAWS adopts a fast convergency speed, and it is more computationally efficient than the QNG.
In the basic LAWS design, the parameter re-initialization is performed by another low-cost first-order optimization in fewer steps (usually $K = 5$).
The expensive quantum Fisher information matrix evaluation is completed after parameter re-initialization.
Assume we require total $T$ iterations for model training. We only evaluate qFIM for $T/K$ rounds. 
Third, we empirically find that LAWS achieves better generalization ability in the classification learning task.
\begin{algorithm2e}[t]
\caption{Look Around and Warm-Start Natural Gradient Descent}
\label{alg:wsngd}
\setstretch{1.2}
\SetAlgoLined
\SetKwInOut{KwIn}{Clients}
\SetKwInOut{KwOut}{Server}
           Input: Objective function $\mathcal C(\theta)$, learning data $\mathcal D$\\
Initialization: $\theta_0$, learning rate $\eta_0$, warm-start learning rate $\mu_0$, warm-start iteration $K$ ($K = 5, 3, \text{or } 2$)\\
\For{$t=1, \cdots, T$}{
$v_{0}^t = \theta_{t-1}$\\
\For{$k = 1, \cdots, K$}{
Draw sample from batch data $\xi \sim \mathcal D_b$\\
$v_{k}^t = v_{k-1}^t - \mu_k \nabla C(\theta; \xi)$\\
}
$\theta_{\text{warm-start}}^t = v_k^t$\\
Compute natural gradient $F_t = \text{FisherIM}(\theta_{\text{warm-start}}^t)$\\
Compute new gradient $g_t = \theta_{\text{warm-start}}^t - \theta_{t-1}$\\
Update parameter:\\
$\theta_{t} = \theta_{\text{warm-start}}^t - \frac{\eta_t}{K} F_t^{-1} g_t$\\
}
\end{algorithm2e}
\subsubsection{Implementation Details}\label{subsec:implementation}
The implementation of LAWS is simple and is compatible with all existing gradient-based optimization frameworks.
Therefore, how to effectively and efficiently perform warm-start (parameter re-initialization) is the key challenge in LAWS's design.
Here, effective warm-start means we expect the periodically re-initialized parameter to be close to the region where the possible optimal parameter resides.
Furthermore, efficient warm-start means parameter re-initialization should be computational inexpensive and possibly without hyper-parameter tuning.
To this end, the design of warm-start is based on a stochastic procedure, where the re-initialized parameter is sampled from a set of stochastic gradients.
Fig~\ref{fig:trajectory} demonstrates the optimization trajectory of LAWS compared to the original QNG.
We search gradients for fewer steps around the current optimal $\theta_t$ and then perform a natural gradient descent step ($F_t^{-1}$) on the accumulation of the previous gradient ($v_{k}^t - \theta_t$) at the re-initialized parameter point $v_{k}^{t}$.

We present two different warm-start strategies.
The first one uses a K-step (K usually small, such as $K=5$) inner loop (as the Algorithm~\ref{alg:wsngd} shows) to compute a set of K consecutive gradients such as $\mathcal G_K^t = \{g_{1}^t, g_{2}^t,\cdots,g_{K}^t\}$.
Then, we compute a weighted average of gradients in $\mathcal G_K^t$ as a warm-start point of the next iteration
\begin{equation}
    \theta_{\text{warm-start}} = \theta_{t-1} + \frac{1}{K}\sum_{k=1}^{K} g_{k}
\end{equation}
for all $g_k^t \in \mathcal G_K^t$ such that
\begin{equation}\label{eq:gradientset}
     \mathcal G_K^t = \big \{ \nabla \mathcal C(v_{k}^t, \xi) | \xi \sim \mathcal D_b \big\}
\end{equation}
where each $v_{k}^t$ is computed as line 9 in Algorithm~\ref{alg:wsngd}.
The second one also uses a K-step inner loop to sample gradient candidates.
But one significant difference compared to the first method is that sample gradient candidates are computed with respect to the same model parameter at current step $t-1$, say $\theta_{t-1}$.
Mathematically, we have
\begin{equation}
    \theta_{\text{warm-start}} = \theta_{t-1} + \frac{1}{K}\sum_{k=1}^{K} \nabla \mathcal C(\theta_{t-1}, \xi) \text{ where } \xi \sim \mathcal D_b
\end{equation}
The two methods above ensure that the warm-start parameter is not far from the current optimal solution.
The re-initialization is achieved using scholastic gradients sampled from randomly selected samples in (min) batch training data.
It is worth mentioning that the FIM ($F$) in LAWS is evaluated on the warm-start parameter $\theta_{\text{warm-start}}$.
In Algorithm~\ref{alg:wsngd}, line 10 (or Algorithm~\ref{alg:general}, line 8), we explicitly show the $\theta_{\text{warm-start}} = v_{k}^t$ for clear presentation.
In an actual implementation, we directly use $v_{k}^t$ as a warm-start parameter to save the memory usage.

Another discussion for Algorithm~\ref{alg:wsngd} is the gradient defined as $g_t = \theta_{\text{warm-start}}^t - \theta_{t-1}$ (line 11).
LAWS simply uses the accumulation of all past gradients in $\mathcal G_K^t$ defined in~\ref{eq:gradientset}.
The past gradients' accumulation re-scales the gradient and may increase the magnitude of the new one compared to using a single gradient.
We believe this is also a key point in making LAWS mitigate the BP.
One could also perform this accumulation by using exponential moving average (EMA), similar to Adam settings, to apply more smooth gradient such as
\begin{equation}
    g_t = (1 - \beta)\sum_{i=1}^{K} \beta^{K-i} g_{i}^{t} \text{ for all } g_{i}^t \in \mathcal G_{K}^t
\end{equation}
where we introduce a new hyper-parameter $\beta \in (0,1)$.
Obviously, the process can be replaced with some gradient noise reduction techniques widely used in various classical machine learning applications.
This is beyond the scope of this literature, and we leave this for future work.
We empirically evaluate the above-mentioned warm-start strategies.
More detailed results and analysis are shown in section~\ref{sec:exp}.

We notice that the proposed LAWS belongs to a certain first-order optimization in modern classical learning regime so-called \textit{Lookahead optimizer}.
Based on their extraordinary work, we propose a general warm-start framework for VQA in the next section.
\begin{algorithm2e}[t]
\caption{Generalized Warm-start Stochastic Gradient Descent (WSSGD)}
\label{alg:general}
\setstretch{1.2}
\SetAlgoLined
Input: Objective function $\mathcal C(\theta)$, learning data $\mathcal D$, warm-start optimizer $\mathcal W$, reparameterization coefficient function $\Delta$ \\
Initialization: $\theta_0$, warm-start learning rate $\mu_0$, warm-start iteration $K$ \\
\For{$t=1, \cdots, T$}{
$v_{0}^t = \theta_{t-1}$\\
\For{$k = 1, \cdots, K$}{
$v_{k}^{t} = \mathcal W \big( \mathcal C(\theta), v_0^{t}, \mu_0, \mathcal D_m \big)$\\
}
$\theta_{\text{warm-start}}^t = v_k^t$\\
Update parameter:\\
$\theta_{t+1} = \Delta_t \theta_{t} +  (1 - \Delta_t) \theta_{\text{warm-start}}^t$\\
}
\end{algorithm2e}

\subsubsection{Generalized Warm-start Algorithm}
In the classical machine learning study,~\cite{zhang2019lookahead} proposed a new optimization algorithm named Lookahead.
Lookahead is orthogonal to those aforementioned approaches~\cite{kingma2014adam} due to the different parameter update settings.
The core idea of Lookahead is to maintain two kinds of model parameters, i.e., ``fast parameter" $v_k^t$ and ``slow parameter" $\theta_t$, and jointly update them.
Specifically, the inner loop takes the slow weights ($\theta_{t-1}$) as initial point and updates the fast weights ($v_{k}^t$) $K$ times to receive $v_{K}^t$; while the outer loop updates the slow weights as 
\begin{equation}\label{eq:lookahead}
    \theta_t = (1 - \alpha)\theta_{t-1} + \alpha v_{K}^t \text{ } \alpha \in (0,1)
\end{equation}
Any standard optimizer, e.g., Vallina SGD, AdaGrad, and Adam, can serve as the inner-loop optimizer.
In our speech, the inner-loop act as a warm-start initialization.
In this way, the Lookahead optimizer achieves remarkable performance improvement over the standard optimizer.
Further, due to its simplicity in implementation, negligible computation and memory cost, and compatibility with almost current ML libraries, Lookahead has been widely adopted.

Interestingly, we find LAWS also falls into this line of research.
In algorithm~\ref{alg:wsngd}, let $\lambda_t = \eta_t / K$, we compute the $\theta_t$ as
\begin{equation}
    \begin{split}
        \theta_t &= \theta_{\text{warm-start}}^t- \lambda_t F_t^{-1} g_t\\
        & = \theta_{\text{warm-start}}^t - \lambda_t F_t^{-1} (v_{k}^t - \theta_{t-1})\\
        & = \lambda_t F_{t}^{-1} \theta_{t-1} + (1 - \lambda_t F_t^{-1}) \theta_{\text{warm-start}}^t\\
    \end{split}
\end{equation}
the last equality is due to $v_k^t = \theta_{\text{warm-start}}^t$.
From the above derivation, We see that the mathematical difference between LAWS and Lookahead is: we replace $\alpha$ in Lookahead~\ref{eq:lookahead} to some value such as $$\alpha = 1 - \lambda_t F_t^{-1}$$, where is not a fixed real coefficient but a Fisher information related quantity.

We can reveal a couple of insights from this observation.
First, this modification makes the proposed algorithm LAWS significantly different from the Lookahead in terms of parameter update and design logic.
Second, when the Fisher information matrix degenerates to the identity matrix $I$, we recover the standard Lookahead optimizer.
On the other hand, one could use the Hessian matrix to replace $F$, where we have a Newton-type optimizer.
Thrid, essentially, the information on the curvature of the cost surface is encoded in FIM for the Riemannian manifold.
FIM can be interpreted as some specific "step size" in first-order optimization. 
In LAWS, the FIM reparameterizes the model according to its second-order information.
The coefficient $\alpha$, therefore, may not be a fixed real number now.

To this end, we propose our unified framework WSSGD for a warm-start stochastic gradient descent algorithm for QNNs as shown in Algorithm~\ref{alg:general}.
We first employ a generalized optimizer $\mathcal W$ for the warm-start inner loop.
The choice of such an optimizer heavily affects re-initialization and model performance.
As reported in~\cite{zhang2019lookahead}, WSSGD may benefit from a larger learning rate in the inner loop.
In other words, we could use a larger step size $\mu$.
We also propose the general form for reparameterization coefficient $\Delta_t$ as a function of gradient, for example
\begin{equation}
\begin{split}
    \text{Lookahead: }&\Delta_t = 1 -\alpha  \\
    \text{WS-SGD: }&\Delta_t = 1 - \lambda_t F_t \\
    \text{Adam-like SGD: } &\Delta_t = 1 - \lambda_t \sqrt{\sum_k{\nabla C(v_k^t, \xi)^2}}
\end{split}
\end{equation}
Moreover, WSSGD is simple and compatible with many VQA libraries such as Pennylane, Tenoerflow Quantum, etc.
Our empirical results are present in section~\ref{sec:exp}, we conclude that WSSGD achieves faster convergence rates
(i.e., smaller optimization error), and enjoys smaller generalization error.
We release our source code implemented by Pennylane at \url{https://github.com/taozeyi1990/LAWS}

\subsection{Convergence Analysis}
In this section, we present the convergence and generalization analysis of the proposed algorithm.
We first provide some useful definitions and assumptions which have been widely adopted in classical machine learning.
We provide the analysis of the convergence for both convex and non-convex objective functions $\mathcal C(\theta)$.
We start by showing the proof of convergence on the convex problem to give some intuition first and then give the proof on a more realistic non-convex problem.
\subsubsection{Assumption}
The assumptions we are making are
\begin{assumption}\label{as1}
(Bounded gradient). The function $\mathcal C(\theta)$ has bounded (stochastic) gradients, i.e., for any $\theta \in R^d$ we have
\begin{equation}
    ||\nabla \mathcal C(\theta, \xi) ||_2 \leq G \text{ for all } \xi \sim \mathcal D
\end{equation}
\end{assumption}
\begin{assumption}\label{as2}
(L-Lipschitz smooth). The function $\mathcal C(\theta)$ is L-Lipschitz smooth i.e.
\begin{equation}
    ||\nabla \mathcal C(\theta) - \nabla \mathcal C(\mu) ||_2 \leq L||\theta - \mu||\text{ for all } \theta, \mu\in R^d
\end{equation}
\end{assumption}
\begin{assumption}\label{as3}
(M-Lipschitz continuous). The function $\mathcal C(\theta)$ is L-Lipschitz continous i.e.
\begin{equation}
    ||\mathcal C(\theta) - \mathcal C(\mu) ||_2 \leq M||\theta - \mu||\text{ for all } \theta, \mu\in R^d
\end{equation}
\end{assumption}
We also define Polyak-Lojasiewicz (PL) condition as
\begin{definition}\label{df1}
(PL Condition) Let $\theta^* \in \argmin_{\theta} \mathcal C(\theta)$. We say a function $\mathcal C(\theta)$ satisfies $\sigma$-PL condition if
\begin{equation}
    \sigma (\mathcal C(\theta) - \mathcal C(\theta^*)) \leq \Vert \nabla \mathcal C(\theta)||^2
\end{equation}
with some constant $\sigma$.
\end{definition}
The above assumptions and definition can be easily obtained and verified in VQE.
Now, we show our theoretical results below in the Theorem~\ref{thm1} and Theorem~\ref{thm2}.

\subsubsection{Convex Objective function}
Given a dataset $\mathcal D = \{(x_i, y_i)\}_{i=1}^n$ where $(x_i, y_i)$ is is drawn from an unknown distribution, one often minimizes the empirical risk $\mathcal L (\theta) = \frac{1}{n}\sum_{i=1}^n \mathcal C(\theta, x_i, y_i)$ via a randomized algorithm, e.g. SGD, to find an estimated optimum $\theta_T \in \argmin_{\theta} L(\theta)$.
However, this empirical solution $\hat{\theta}$, differs from the desired optimum $\theta^*$ of the population risk
\begin{equation}
    \theta^{*} \in \argmin_{\theta} L(\theta, \mathcal D) = \mathbb E_{x,y\sim \mathcal D} [\mathcal C(\theta, x)]
\end{equation}
To begin with, we first investigate the convergence performance of WSSGD when its warm-start optimizer $\mathcal W$ is SGD.
We summarize our main results in Theorem~\ref{thm1} below.
\begin{theorem}\label{thm1}(Convex) Suppose the objective function $\mathcal C(\theta)$ is $gamma$-strongly convex, M-Lipschitz continuous, and L-Lipschitz smooth w.r.t., $\theta$. Let $\theta^* = \argmin_{\theta}\mathcal C(\theta)$. Let war-start learning rate $\mu_k^t=\frac{c_0}{((t-1)k + K +2)}, c_0 \in (0,1]$, the optimization error of the output $\theta_T$ of WS-SGD satisfies
\begin{equation}
\begin{split}
        \mathbb E[\mathcal C(\theta_T) - \mathcal C\theta^*)] &\leq \frac{e^{2\Delta}L(k+2)^{2\Delta}}{2((T+1)K +2)^{2\Delta}}\Vert \theta_0 - \theta^*\Vert^2\\
        &+ \frac{16LG^2}{c_0^2((T+1)K + 2)^{2(1 - \Delta)}(2\Delta - 1)}
\end{split}
\end{equation}
\end{theorem}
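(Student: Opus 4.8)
The plan is to collapse the nested WS-SGD iteration into a single stochastic gradient process indexed by a global step counter $s = (t-1)K + k$ that runs across all inner iterations, so that the fast-weight trajectory $\{v_k^t\}$ becomes one long SGD run of $S = (T+1)K$ steps with the decaying step size $\mu_s = c_0/(s + K + 2)$. This reduction is legitimate because the outer update resets $v_0^{t+1} = \theta_t$, making the global iterate sequence continuous across outer rounds. Under this viewpoint the exponent $\Delta$ is read off from the product $c_0\gamma$ of the step-size constant and the strong-convexity modulus: this single quantity controls both the contraction of the initial-error term and the decay of the accumulated noise, and the hypothesis $2\Delta > 1$ is precisely the condition that will make the noise series summable. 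I would first state this reduction explicitly and fix notation before entering the recursion.

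The core of the argument is a one-step distance recursion. Writing the inner step as $v_s = v_{s-1} - \mu_s g_s$ with $\mathbb{E}[g_s \mid v_{s-1}] = \nabla\mathcal{C}(v_{s-1})$ and $\Vert g_s\Vert_2 \le G$ by Assumption~\ref{as1}, I expand $\Vert v_s - \theta^*\Vert_2^2$, take conditional expectation, and invoke $\gamma$-strong convexity in the form $\langle \nabla\mathcal{C}(v_{s-1}), v_{s-1} - \theta^*\rangle \ge \mathcal{C}(v_{s-1}) - \mathcal{C}(\theta^*) + \tfrac{\gamma}{2}\Vert v_{s-1} - \theta^*\Vert_2^2$. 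This yields
\begin{equation}
\mathbb{E}\big[\Vert v_s - \theta^*\Vert_2^2\big] \le (1 - \mu_s\gamma)\,\mathbb{E}\big[\Vert v_{s-1} - \theta^*\Vert_2^2\big] - 2\mu_s\,\mathbb{E}\big[\mathcal{C}(v_{s-1}) - \mathcal{C}(\theta^*)\big] + \mu_s^2 G^2 .
\end{equation}
Dropping the nonpositive function-gap term leaves a clean contraction recursion for the squared distance that is ready to unroll.

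Next I would unroll this recursion from the initial index down to $S$. The initial-error contribution carries the product $\prod_{j}(1 - \mu_j\gamma)$; using $1 - x \le e^{-x}$ together with the harmonic estimate $\sum_j \mu_j\gamma \approx c_0\gamma\,\ln\!\big((S+2)/(K+2)\big)$ bounds this product by $e^{2\Delta}\big((K+2)/(S+2)\big)^{2\Delta}$ up to the stated constants, and an application of $L$-smoothness, $\mathcal{C}(\theta_T) - \mathcal{C}(\theta^*) \le \tfrac{L}{2}\Vert \theta_T - \theta^*\Vert_2^2$, converts the distance into the function-value form, producing the first term with its $L/2$ prefactor. The residual comes from the weighted noise sum $\sum_s \mu_s^2 G^2 \prod_{j>s}(1-\mu_j\gamma)$; estimating the tail products as above reduces it to comparing a series of the form $\sum_s (s+K+2)^{2\Delta - 2}$ to an integral, which converges exactly when $2\Delta > 1$ and leaves the factor $(2\Delta - 1)^{-1}$ in the denominator. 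Assembling the two contributions gives the claimed bound.

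The main obstacle will be justifying the single-process reduction across the outer reparameterization step, since the outer update $\theta_t = \theta_{\text{warm-start}}^t - \lambda_t F_t^{-1} g_t$ from Algorithm~\ref{alg:wsngd} involves the Fisher matrix $F_t^{-1}$ rather than a plain gradient step, so the contraction factor $(1-\mu_s\gamma)$ is not immediately preserved at the round boundaries. I would control this by using $M$-Lipschitz continuity (Assumption~\ref{as3}) and the bounded-gradient assumption to bound the outer displacement $\Vert \theta_t - v_K^t\Vert_2$ and absorb it into the same $\mu_s^2 G^2$-type noise budget, so that the global recursion above continues to hold up to constants. Verifying that the $F_t^{-1}$ reweighting does not degrade the per-step contraction, and that its effect is dominated by the noise term rather than the bias term, is the delicate point on which the whole reduction rests.
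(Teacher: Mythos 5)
There is a genuine gap, and it sits precisely at the step you flag as the ``main obstacle.'' Your single-process collapse assumes the iterate sequence is continuous across outer rounds because $v_0^{t+1} = \theta_t$; but in the algorithm the theorem actually analyzes (Algorithm~\ref{alg:general}, line 10), the outer update is the interpolation $\theta_t = \Delta_t\,\theta_{t-1} + (1-\Delta_t)\,v_K^t$, so $\theta_t \neq v_K^t$ whenever $\Delta_t > 0$: at every round boundary the iterate is pulled back toward $\theta_{t-1}$ by a relative factor $\Delta$. This pull-back is a systematic bias, not noise --- its magnitude is proportional to the entire inner-loop displacement (order $K\mu G$, the same order as the progress made), and its direction opposes the contraction --- so it cannot be absorbed into the $\mu_s^2 G^2$ budget as you propose; doing so erases exactly the $\Delta$-dependence the theorem asserts. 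The symptom of this is your reading of the exponent: you identify $\Delta$ with $c_0\gamma$, but in the theorem $\Delta$ is the outer reparameterization (Lookahead-style mixing) coefficient, independent of the strong-convexity modulus. A genuinely collapsed single SGD chain with step $c_0/(s+K+2)$ would yield exponents depending on $c_0\gamma$ only, with no $\Delta$ anywhere, hence a bound of a different shape from the stated one. Your concern about the Fisher matrix $F_t^{-1}$ is also moot here: Theorem~\ref{thm1} is about WS-SGD with scalar $\Delta_t$, not LAWS's Fisher-weighted update, so no $F_t^{-1}$ enters the convex analysis.

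The paper's route (following Theorem 1 of~\cite{zhou2021towards}) keeps the two-level structure instead of collapsing it. The inner loop is handled exactly as in your recursion: $\mathbb E\big[\Vert v_k^t - \theta^*\Vert^2\big] \leq (1-\gamma\mu_{k-1}^t)\,\mathbb E\big[\Vert v_{k-1}^t - \theta^*\Vert^2\big] + (\mu_{k-1}^t G)^2$, unrolled over $k = 1,\dots,K$ --- that part of your proposal is sound. The key lemma you are missing is at the boundary: convexity of $\Vert\cdot\Vert^2$ (Jensen) applied to the interpolation gives $\mathbb E\big[\Vert\theta_t - \theta^*\Vert^2\big] \leq \Delta\,\mathbb E\big[\Vert\theta_{t-1} - \theta^*\Vert^2\big] + (1-\Delta)\,\mathbb E\big[\Vert v_K^t - \theta^*\Vert^2\big]$, which converts the boundary jump into a clean weighted recursion rather than an error term. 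Substituting the unrolled inner bound and unwinding over $t = 1,\dots,T$ is what produces the exponent $2\Delta$ on the initial-error term, the exponent $2(1-\Delta)$ on the accumulated noise, and the factor $(2\Delta - 1)^{-1}$ from summing the tail; smoothness then converts the final squared distance into the function gap, as in your last step. Without the Jensen interpolation step, the two distinct exponents $2\Delta$ and $2(1-\Delta)$ in the claimed bound cannot be recovered.
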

\begin{proof}
(Proof sketch) Due to the page limitation, we present the proof sketch to show some core results.
This proof mainly follows the proof of Theorem 1 in~\cite{zhou2021towards}.
We first bound the inner loop $\mathbb E\Vert v_{K}^t - \theta^*\Vert$, where we have $\mathbb E\big[\Vert v_{K}^t - \theta^*\Vert^2\big] \leq (1 - \gamma \mu_{K-1}^t) \mathbb E\big[\Vert v_{K-1}^t - \theta^*\Vert^2\big] + (\mu_{K-1}^t G)^2$.
Let $\mu_{k}^t = \frac{c_0}{((t-1)k + K +2)}$, for some constant $c_0 >0$, we roll the above recurrence relation from $k = 1$ to $K$, we have $\mathbb E\big[\Vert v_{K}^t - \theta^*\Vert^2\big] \leq \bigg ( (\frac{(t-1)k + 2}{tk+2})^2 \mathbb E\big[\Vert \theta_{t-1} - \theta^*\Vert^2\big] + \frac{16c_0G^2}{c_0^2(tk+2)^2}\bigg)$.
The outer loop parameter update follows $\mathbb E\big[\Vert \theta_{t} - \theta^*\Vert^2\big] \leq \Delta\mathbb E\big[\Vert \theta_{t-1} - \theta^*\Vert^2\big] + (1- \Delta)\mathbb E\big[\Vert v_{K}^t - \theta^*\Vert^2\big]$ according to line 10 in Algorithm~\ref{alg:general}.
We have $\mathbb E\big[\Vert \theta_{t} - \theta^*\Vert^2\big] \leq \bigg(\Delta + (1 -\Delta)\big(\frac{(t-1)K + 2}{tK+2}\big)^2\bigg) \mathbb E\big[\Vert \theta_{t-1} - \theta^*\Vert^2\big] + \frac{16c_0(1-\Delta)G^2}{c_0^2(tK+2)^2}$.
Again, unwinding this recurrence relation from $t=1$ to $T$, yields $\mathbb E\big[\Vert \theta_{T} - \theta^*\Vert^2\big] \leq \frac{e^{2\Delta} (k+2)^{2\Delta}}{((T+1)K +2)^{2\Delta}}\Vert \theta_0 - \theta^*\Vert + \frac{32G^2}{c_0^2((T+1)K + 2)^{2(1 - \Delta)}(2\Delta - 1)}$.
Since $\mathcal C(\theta)$ is M-smooth and $\theta^*$ is its optimum, we have $\mathbb E[\mathcal C(\theta_T) - \mathcal C(\theta^*)] \leq \frac{M}{2} \mathbb E[\Vert \theta_T - \theta^*\Vert^2]$, yields the result in Theorem~\ref{thm1}.
\end{proof}
\subsubsection{Non-Convex Objective function}
To prove no-convex objective function, we use the Polyak-Lojasiewicz condition defined in~\ref{df1}.
\begin{theorem}\label{thm2}(Non-Convex) Suppose the objective function $\mathcal C(\theta)$ is M-Lipschitz continuous, and L-Lipschitz smooth w.r.t., $\theta$. In addition, suppose $\mathcal C(\theta)$ satisfies $\sigma$-PL condition. Let $\mu_{k}^t = \frac{1}{tK+t+1}$, we have
\begin{equation}
\begin{split}
        \mathbb E[\mathcal C(\theta_T) - \mathcal C(\theta^*)] &\leq \frac{4}{(TK + 1)^{2\Delta}}\mathbb E \big[ \mathcal C(\theta_0) - \mathcal C(\theta^*)\big]\\
        &+ \frac{2\Delta M G^2C_0}{(TK + 1)^{2\Delta -1}}
\end{split}
\end{equation}
where $C_0 = \Delta + (1- \Delta)(K-1)$.
\end{theorem}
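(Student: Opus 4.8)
The plan is to mirror the structure of the convex argument behind Theorem~\ref{thm1}, but to propagate the function-value gap $\mathcal C(\theta)-\mathcal C(\theta^*)$ through the recursion instead of the parameter distance $\Vert\theta-\theta^*\Vert^2$, using the $\sigma$-PL condition of Definition~\ref{df1} in place of strong convexity. First I would establish a one-step contraction for the inner loop. Starting from the $L$-smoothness descent lemma of Assumption~\ref{as2} applied to the inner SGD update $v_k^t=v_{k-1}^t-\mu_{k-1}^t g_{k-1}^t$, taking expectation over the minibatch so that $\mathbb E[g_{k-1}^t]=\nabla\mathcal C(v_{k-1}^t)$, and bounding the second-order term with the bounded-gradient Assumption~\ref{as1}, I obtain
\[
\mathbb E[\mathcal C(v_k^t)-\mathcal C(\theta^*)]\le \mathbb E[\mathcal C(v_{k-1}^t)-\mathcal C(\theta^*)]-\mu_{k-1}^t\,\mathbb E\Vert\nabla\mathcal C(v_{k-1}^t)\Vert^2+\tfrac{L}{2}(\mu_{k-1}^t)^2 G^2 .
\]
Applying the PL inequality $\sigma(\mathcal C(v_{k-1}^t)-\mathcal C(\theta^*))\le\Vert\nabla\mathcal C(v_{k-1}^t)\Vert^2$ converts this into the contraction $\mathbb E[\mathcal C(v_k^t)-\mathcal C(\theta^*)]\le(1-\sigma\mu_{k-1}^t)\mathbb E[\mathcal C(v_{k-1}^t)-\mathcal C(\theta^*)]+\tfrac{L}{2}(\mu_{k-1}^t)^2G^2$.

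Second, I would unroll this recurrence over the inner index $k=1,\dots,K$ with $v_0^t=\theta_{t-1}$ and substitute the prescribed schedule $\mu_k^t=\tfrac{1}{tK+t+1}$, bounding the product of contraction factors $\prod_k(1-\sigma\mu_{k-1}^t)$ from above via $1-x\le e^{-x}$ and summing the $O((\mu_{k}^t)^2G^2)$ noise terms over the $K$ inner steps. This expresses $\mathbb E[\mathcal C(v_K^t)-\mathcal C(\theta^*)]$ in terms of $\mathbb E[\mathcal C(\theta_{t-1})-\mathcal C(\theta^*)]$ plus an accumulated noise of order $K(\mu^t)^2G^2$.

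Third — and this is the step I expect to be the main obstacle — I would handle the outer averaging step $\theta_t=\Delta_t\theta_{t-1}+(1-\Delta_t)\theta_{\text{warm-start}}^t$ of Algorithm~\ref{alg:general}. In the convex proof this step was immediate because $\Vert\cdot\Vert^2$ is convex, so Jensen gave $\Vert\theta_t-\theta^*\Vert^2\le\Delta\Vert\theta_{t-1}-\theta^*\Vert^2+(1-\Delta)\Vert v_K^t-\theta^*\Vert^2$. Here $\mathcal C$ is non-convex, so no analogous inequality holds for the function value and the convex-combination bound breaks. Instead I would invoke $M$-Lipschitz continuity (Assumption~\ref{as3}): since $\theta_t-v_K^t=\Delta_t(\theta_{t-1}-v_K^t)$ and the inner-loop displacement obeys $\Vert v_K^t-\theta_{t-1}\Vert\le\sum_{k=1}^{K}\mu_{k-1}^t G$ by Assumption~\ref{as1}, one gets $\mathcal C(\theta_t)\le\mathcal C(v_K^t)+M\Delta_t\Vert\theta_{t-1}-v_K^t\Vert$. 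Combining this outer correction with the inner-loop result of Step~2 collapses everything into a single scalar recursion $r_t\le\rho_t r_{t-1}+\varepsilon_t$ for $r_t:=\mathbb E[\mathcal C(\theta_t)-\mathcal C(\theta^*)]$, whose additive error gathers a $\Delta$ contribution from the outer Lipschitz term and a $(1-\Delta)(K-1)$ contribution from the propagated inner-loop noise, which is exactly the quantity $C_0=\Delta+(1-\Delta)(K-1)$ appearing in the statement.

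Finally, I would unwind $r_t\le\rho_t r_{t-1}+\varepsilon_t$ from $t=1$ to $T$. With the schedule $\mu_k^t=\tfrac{1}{tK+t+1}$ the contraction factors telescope into a polynomial decay of the form $\prod_t\rho_t\le 4\,(TK+1)^{-2\Delta}$ acting on the initial gap $\mathbb E[\mathcal C(\theta_0)-\mathcal C(\theta^*)]$, while the accumulated noise, summed against the same schedule, produces the $(TK+1)^{-(2\Delta-1)}$ term with coefficient $2\Delta M G^2 C_0$. The genuinely delicate bookkeeping lies entirely in matching the step-size schedule to the telescoping so that both the decay exponent $2\Delta$ and the noise exponent $2\Delta-1$ emerge correctly — the same mechanism as in the convex case of Theorem~\ref{thm1}, but now driven by the PL constant $\sigma$ rather than the strong-convexity constant and routed through function values instead of distances.
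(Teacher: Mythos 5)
Your overall route matches the paper's (omitted) sketch for Theorem~\ref{thm2} --- first bound the inner-loop function-value gap $\mathbb E[\mathcal C(v_K^t)-\mathcal C(\theta^*)]$ via the descent lemma plus the $\sigma$-PL condition, then pass through the outer combination $\theta_t=\Delta\theta_{t-1}+(1-\Delta)v_K^t$ --- and your Steps 1--2 are correct and standard. The genuine gap is in Step 3, and it propagates into Step 4. Your outer bound $\mathcal C(\theta_t)\le\mathcal C(v_K^t)+M\Delta\Vert\theta_{t-1}-v_K^t\Vert$ invokes Lipschitz continuity from the single endpoint $v_K^t$, so the scalar recursion you obtain is $r_t\le\rho_t r_{t-1}+\varepsilon_t$ with $\rho_t=\prod_{k}(1-\sigma\mu_{k-1}^t)$: the mixing coefficient $\Delta$ never enters the contraction factor, only the additive error. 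Unrolling that recursion with $\mu_k^t=\frac{1}{tK+t+1}$ yields a decay exponent governed by $\sigma$ (roughly $\prod_t\rho_t\lesssim (TK)^{-c\sigma}$ for a constant $c$ depending on $K$), so your assertion that the contraction factors ``telescope into $\prod_t\rho_t\le 4(TK+1)^{-2\Delta}$'' cannot follow from the recursion you actually set up: the exponent $2\Delta$ in Theorem~\ref{thm2} is tied to the outer averaging, exactly as in the convex case, where the sketch of Theorem~\ref{thm1} exhibits the contraction $\Delta+(1-\Delta)\big(\frac{(t-1)K+2}{tK+2}\big)^2$ obtained via Jensen. For the same reason your accounting of $C_0$ fails: with the one-sided bound, the inner-loop noise passes into $r_t$ with weight $1$, not $(1-\Delta)$, so the constant $C_0=\Delta+(1-\Delta)(K-1)$ does not come out of your recursion even though you claim it does.

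The missing idea is to recover the weighted-combination structure without convexity by applying $M$-Lipschitz continuity from \emph{both} endpoints. Since $\theta_t-v_K^t=\Delta(\theta_{t-1}-v_K^t)$ and $\theta_t-\theta_{t-1}=(1-\Delta)(v_K^t-\theta_{t-1})$, you have both $\mathcal C(\theta_t)\le\mathcal C(v_K^t)+M\Delta\Vert\theta_{t-1}-v_K^t\Vert$ and $\mathcal C(\theta_t)\le\mathcal C(\theta_{t-1})+M(1-\Delta)\Vert\theta_{t-1}-v_K^t\Vert$; taking $\Delta$ times the second plus $(1-\Delta)$ times the first gives $\mathcal C(\theta_t)-\mathcal C(\theta^*)\le\Delta\big(\mathcal C(\theta_{t-1})-\mathcal C(\theta^*)\big)+(1-\Delta)\big(\mathcal C(v_K^t)-\mathcal C(\theta^*)\big)+2M\Delta(1-\Delta)\Vert\theta_{t-1}-v_K^t\Vert$. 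Now the contraction factor becomes $\Delta+(1-\Delta)\prod_k(1-\sigma\mu_{k-1}^t)$, mirroring the convex proof; the inner-loop noise is weighted by $(1-\Delta)$ while the Lipschitz cross term (bounded by $2M\Delta G\sum_k\mu_{k-1}^t$ using Assumption~\ref{as1}) carries the $\Delta$ weight, which is precisely the decomposition behind $C_0=\Delta+(1-\Delta)(K-1)$ and the coefficient $2\Delta MG^2C_0$. Unwinding this corrected recursion with the prescribed schedule --- as in the Lookahead analysis of Zhou et al.\ that the paper follows --- is what produces both exponents $2\Delta$ and $2\Delta-1$; with that replacement in Step 3, your Steps 1, 2, and 4 go through as planned.
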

We again omit the complete proof of this theorem.
The proof sketch is we first bound $\mathbb E\big[\Vert\mathcal C (v_K^t) - \mathcal C(\theta^*)\Vert\big]$, then we use the relation of $v_K^t$ and $\theta_t$ defined in the Algorithm~\ref{alg:wsngd} line 10 to derive the final bound of $\mathbb E[\mathcal C(\theta_T) - \mathcal C(\theta^*)]$.
In the next section, we present the numerical simulation results of LAWS, WS-SGD, and their variants.
\section{Numerical Simulations}\label{sec:exp}
\begin{figure*}[t]
    \centering
    \includegraphics[width=1.9\columnwidth]{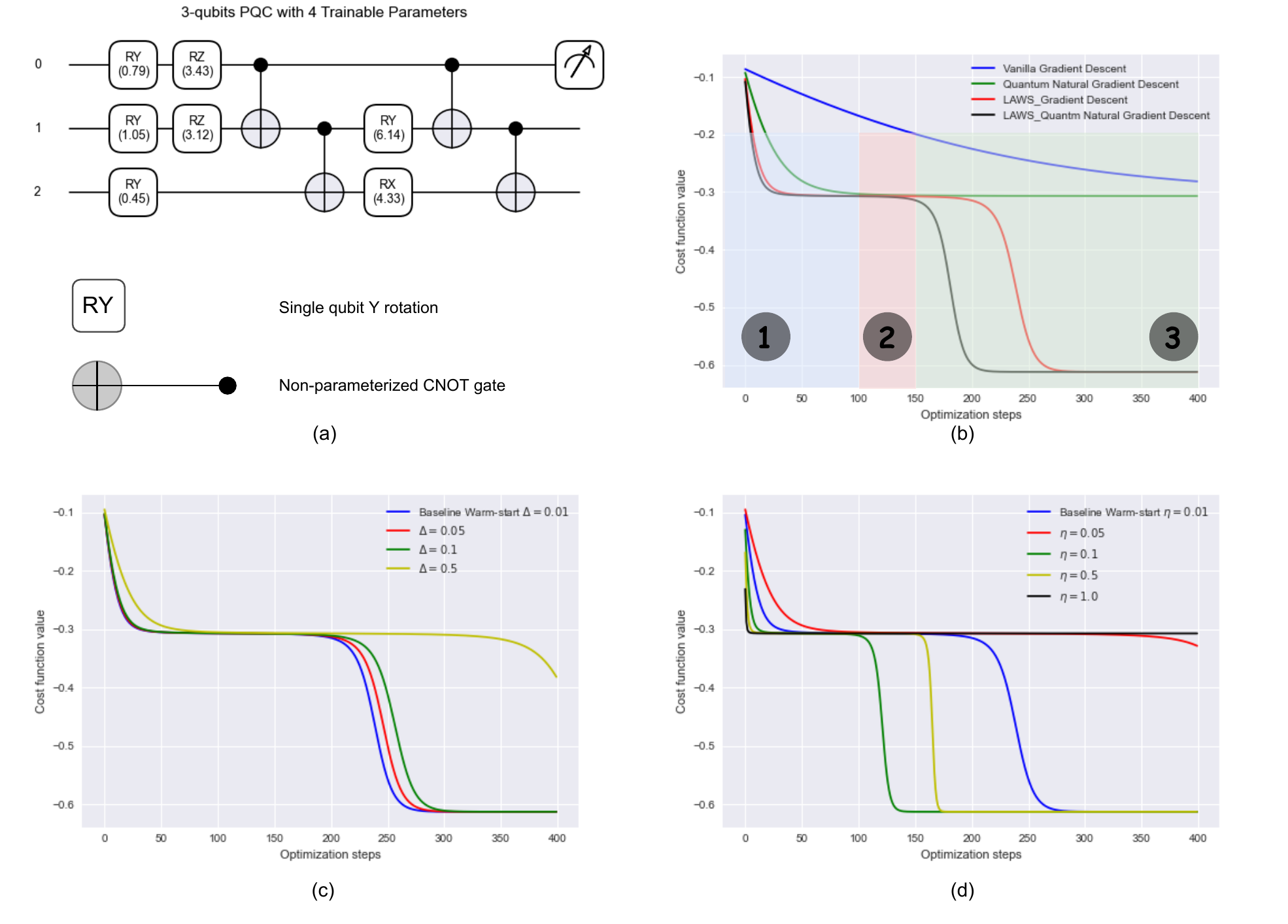}
    \caption{Evaluation on randomly designed PQC}
    \label{fig:randomPQC}
\end{figure*}

\begin{figure*}[t]
    \centering
    \includegraphics[width=2.1\columnwidth]{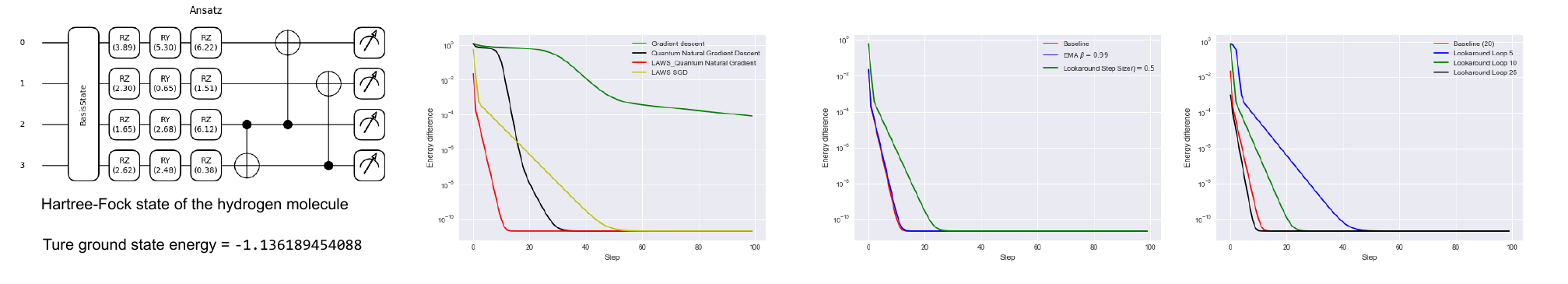}
    \caption{Performance on Hydrogen VQE}
    \label{fig:HM}
\end{figure*}

\begin{figure*}[t]
\centering
\subfigure[Cost Value]{\includegraphics[width=.66\columnwidth]{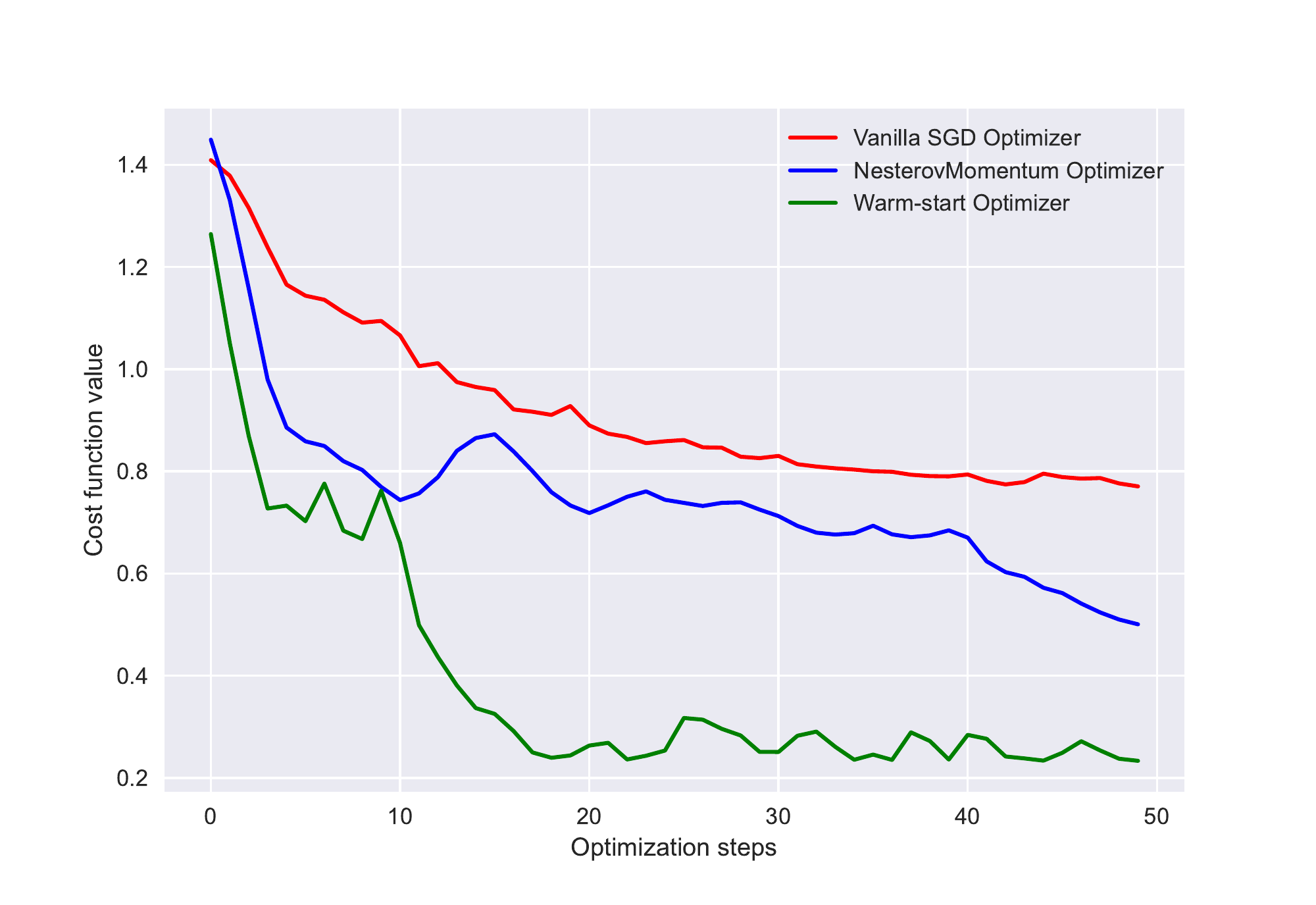}\label{fig:class:a}}
\subfigure[Training Accuracy]{\includegraphics[width=.66\columnwidth]{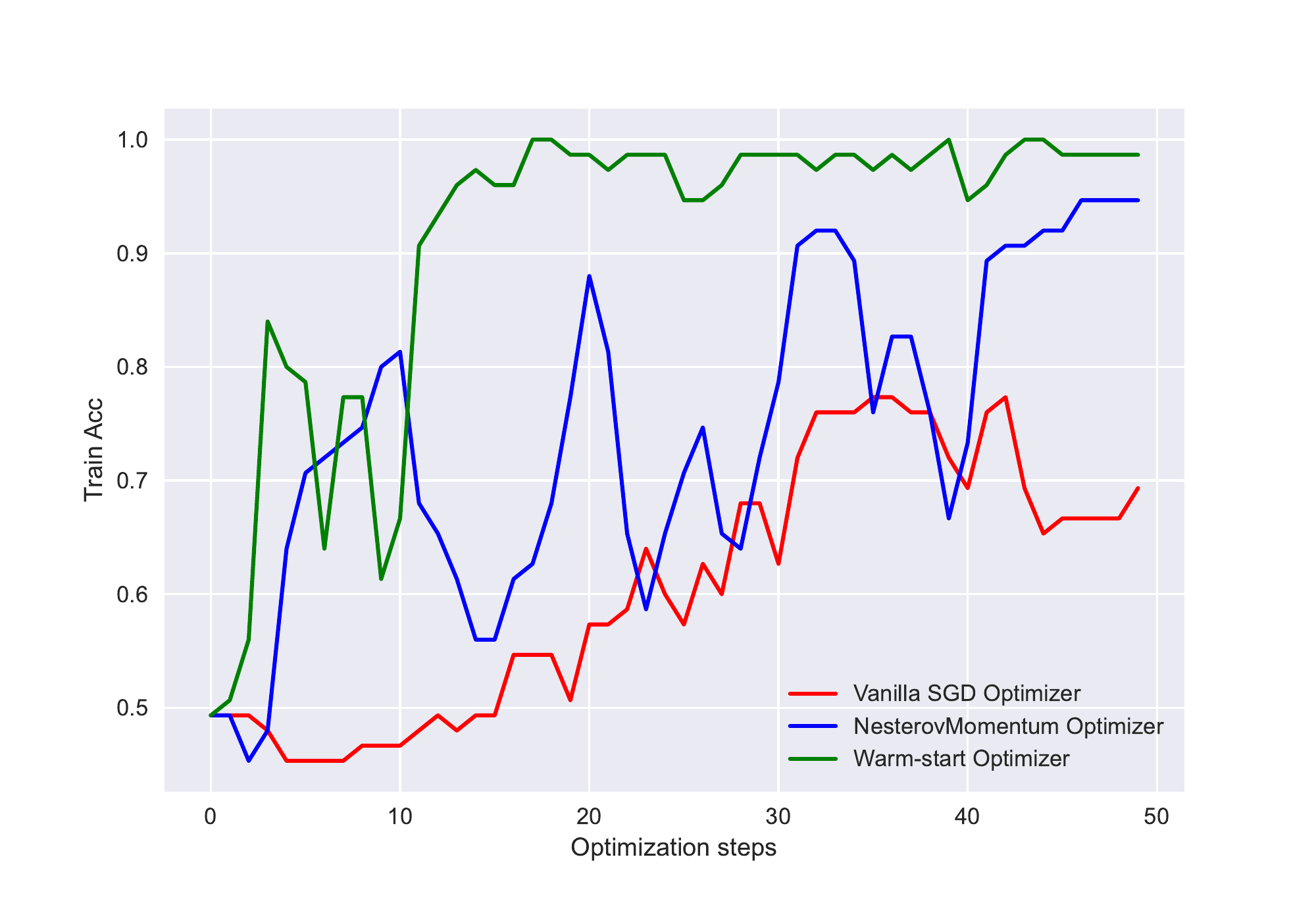}\label{fig:class:b}}
\subfigure[Validation Accuracy]{\includegraphics[width=.66\columnwidth]{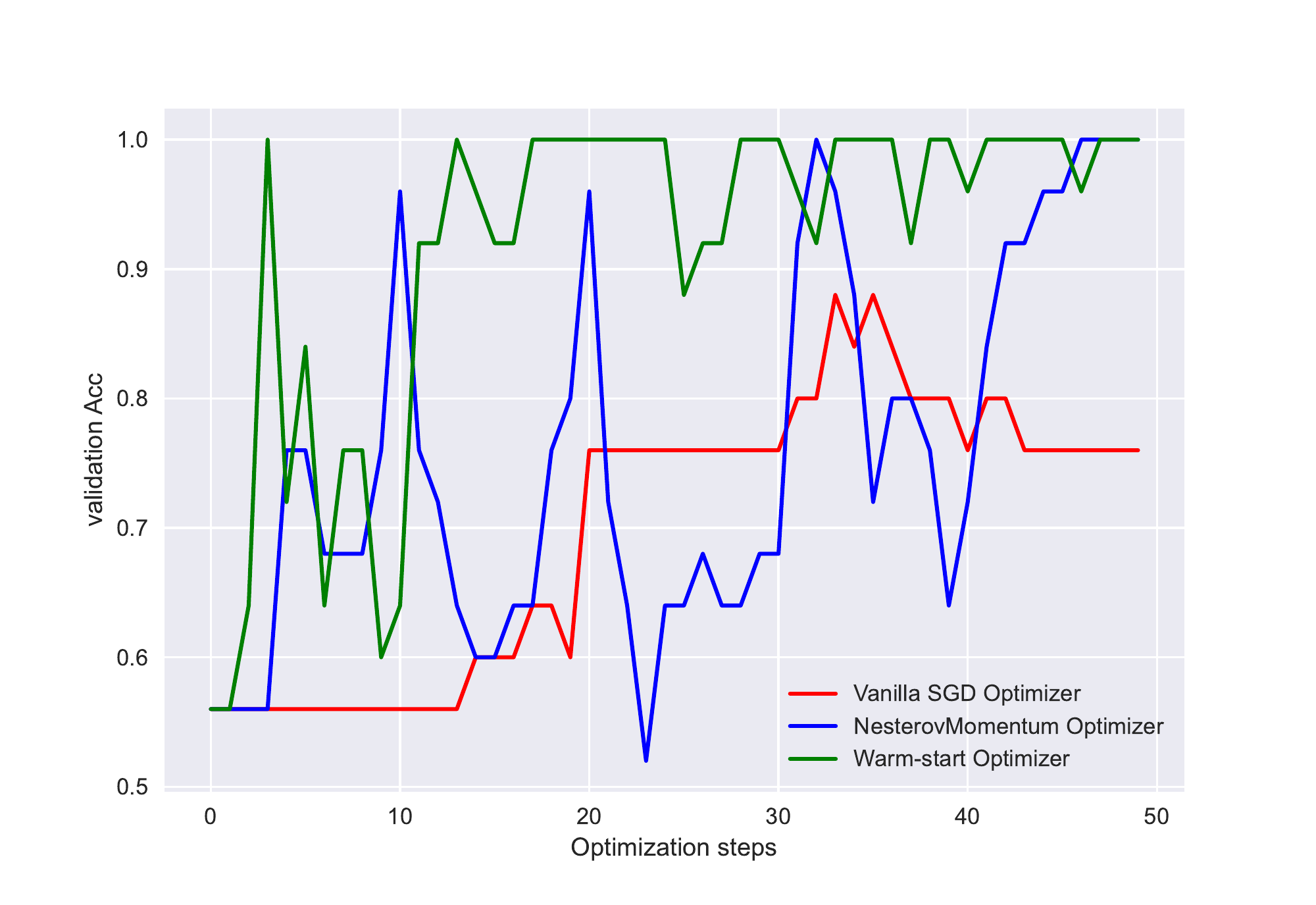}\label{fig:class:c}}
\subfigure[Classifier Trained via SGD]{\includegraphics[width=.66\columnwidth]{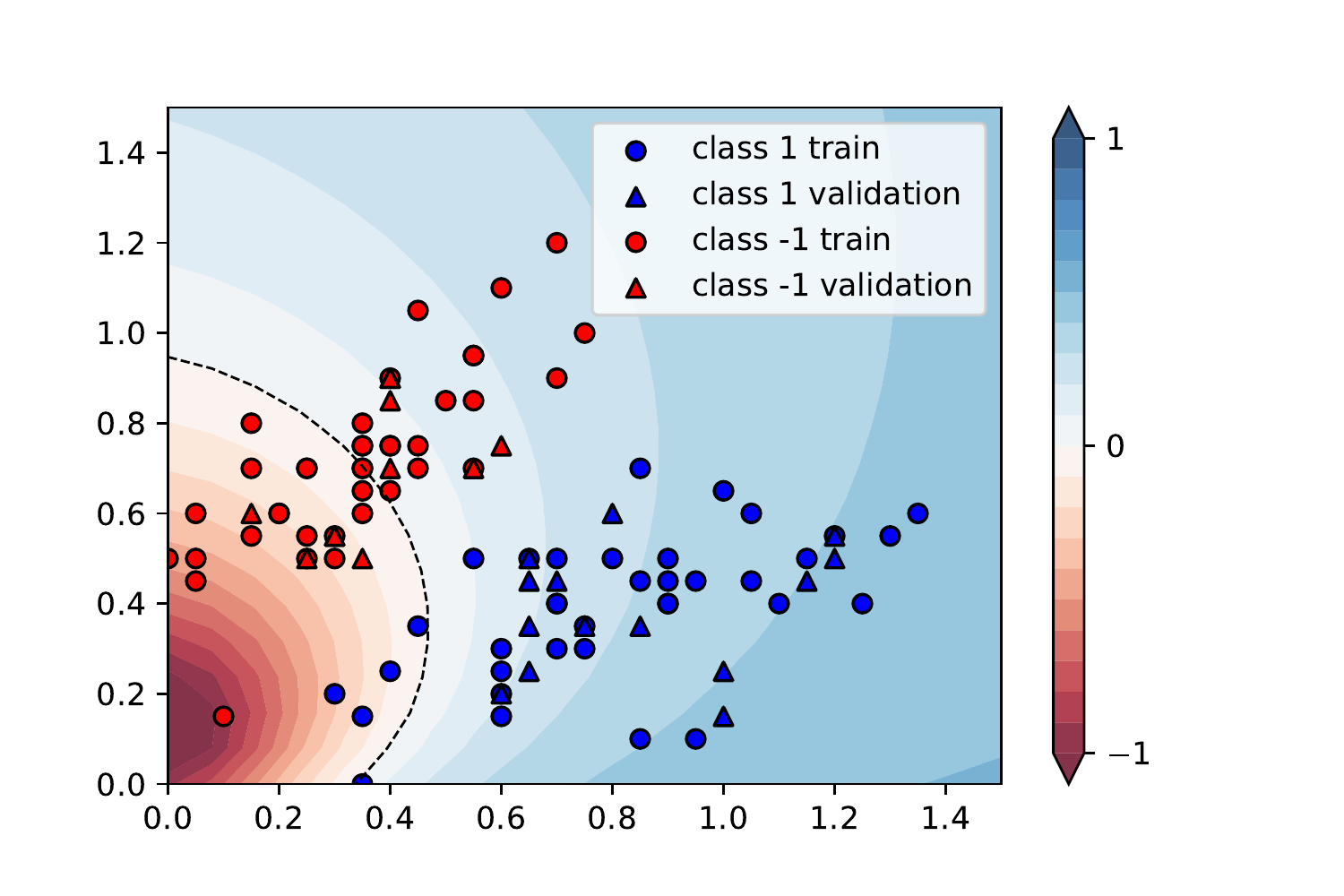}\label{fig:class:d}}
\subfigure[Classifier Trained via Nesterov]{\includegraphics[width=.66\columnwidth]{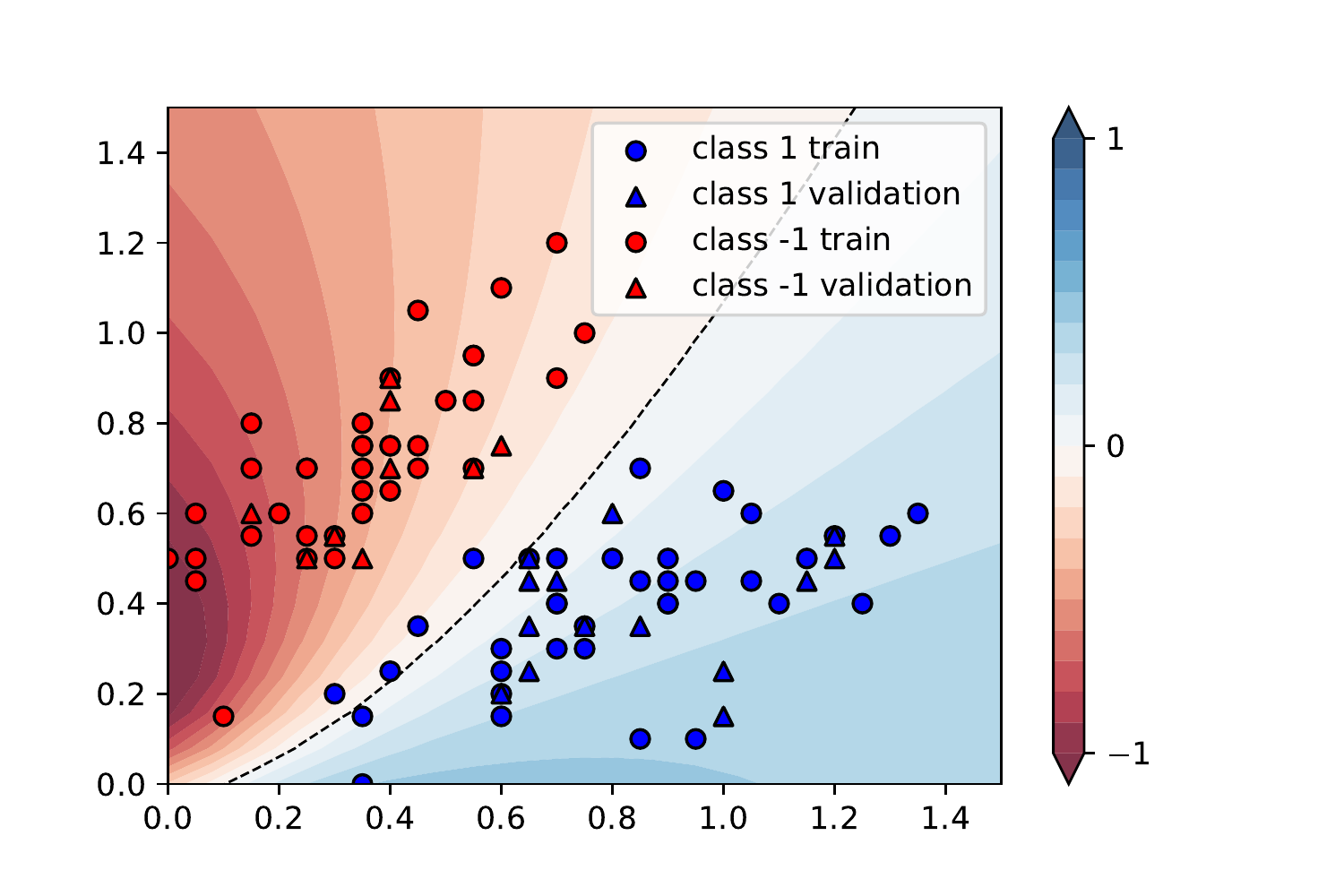}\label{fig:class:e}}
\subfigure[Classifier Trained via LAWS]{\includegraphics[width=.66\columnwidth]{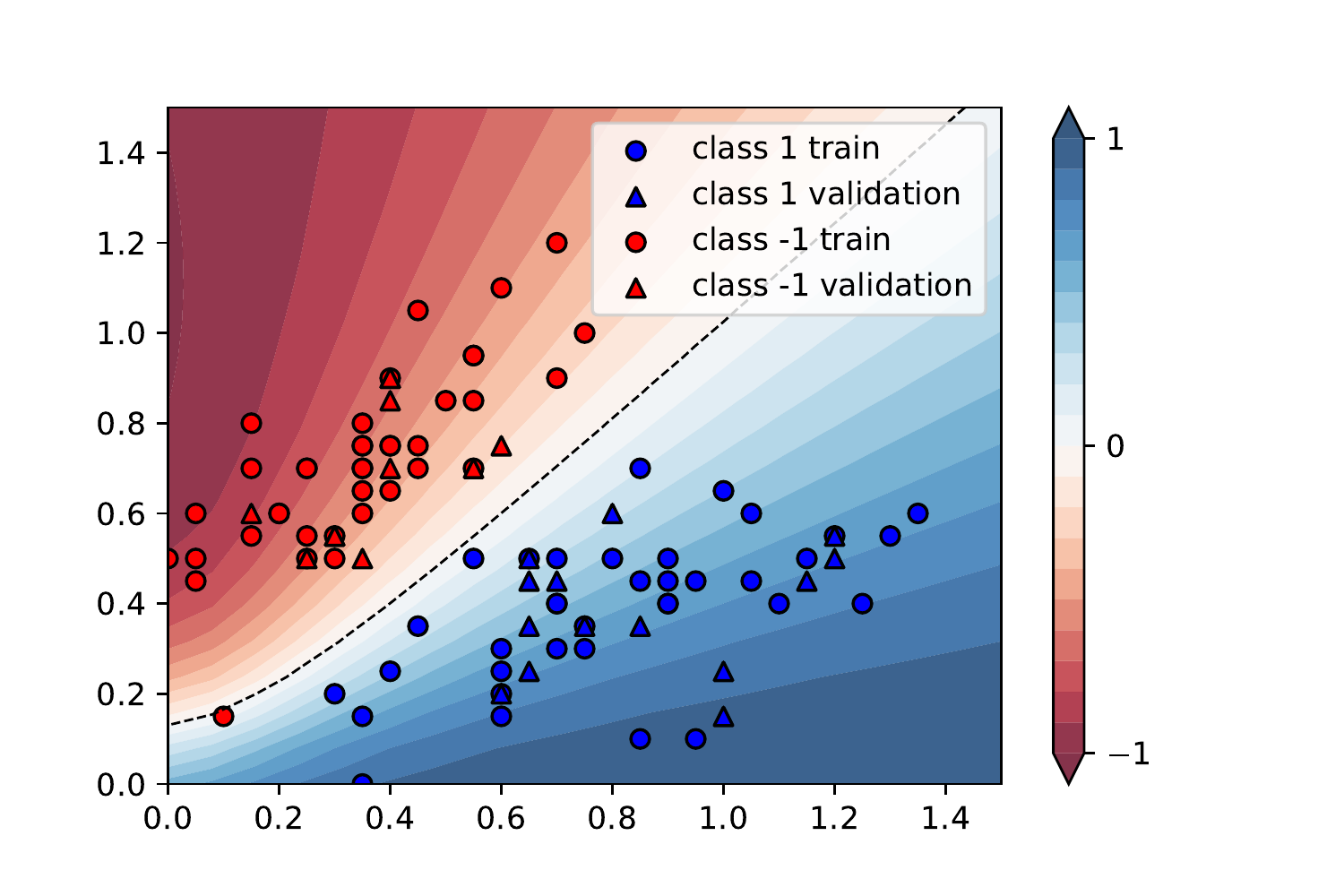}\label{fig:class:f}}
\caption{Variational classifier for Iris classification task: SGD vs. Nesterov vs. LAWS}
\label{fig:class}
\end{figure*}

\begin{figure*}[t]
\centering
\subfigure[costs of WS-SGD's variants]{\includegraphics[width=.5\columnwidth]{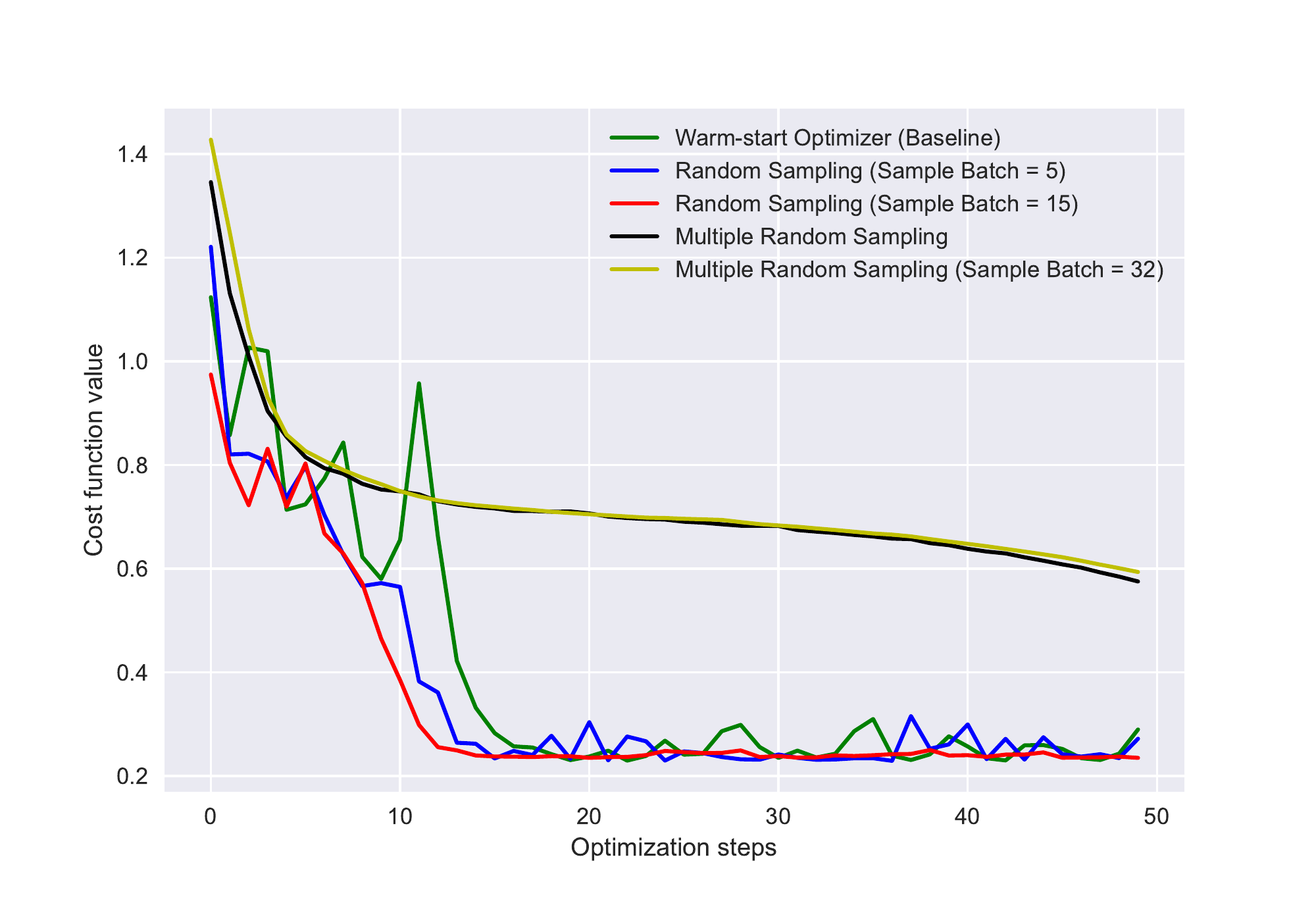}\label{fig:va:a}}
\subfigure[Train Acc of WS-SGD's variants]{\includegraphics[width=.5\columnwidth]{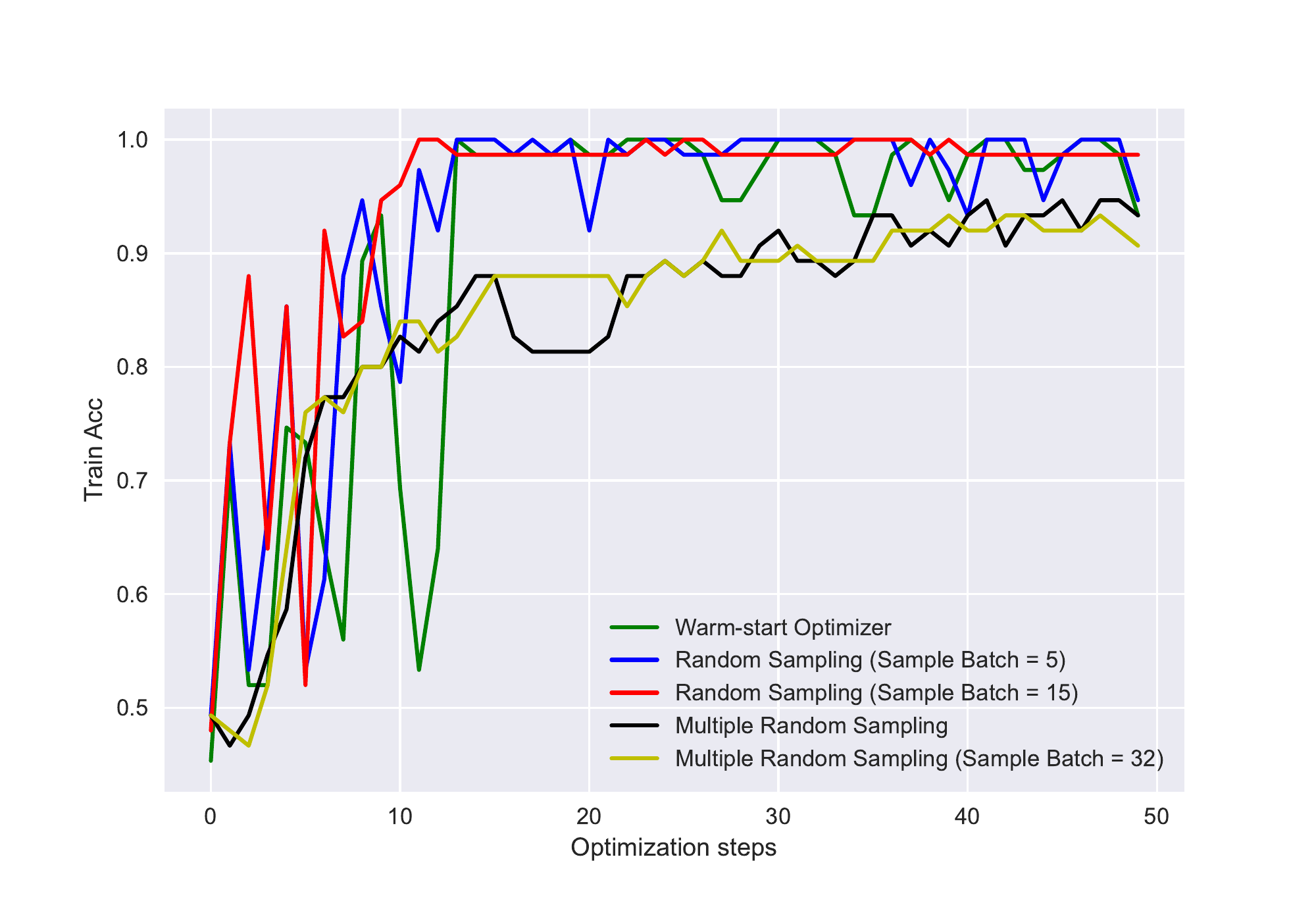}\label{fig:va:b}}
\subfigure[WS-SGD Warm-start with Random Sample]{\includegraphics[width=.5\columnwidth]{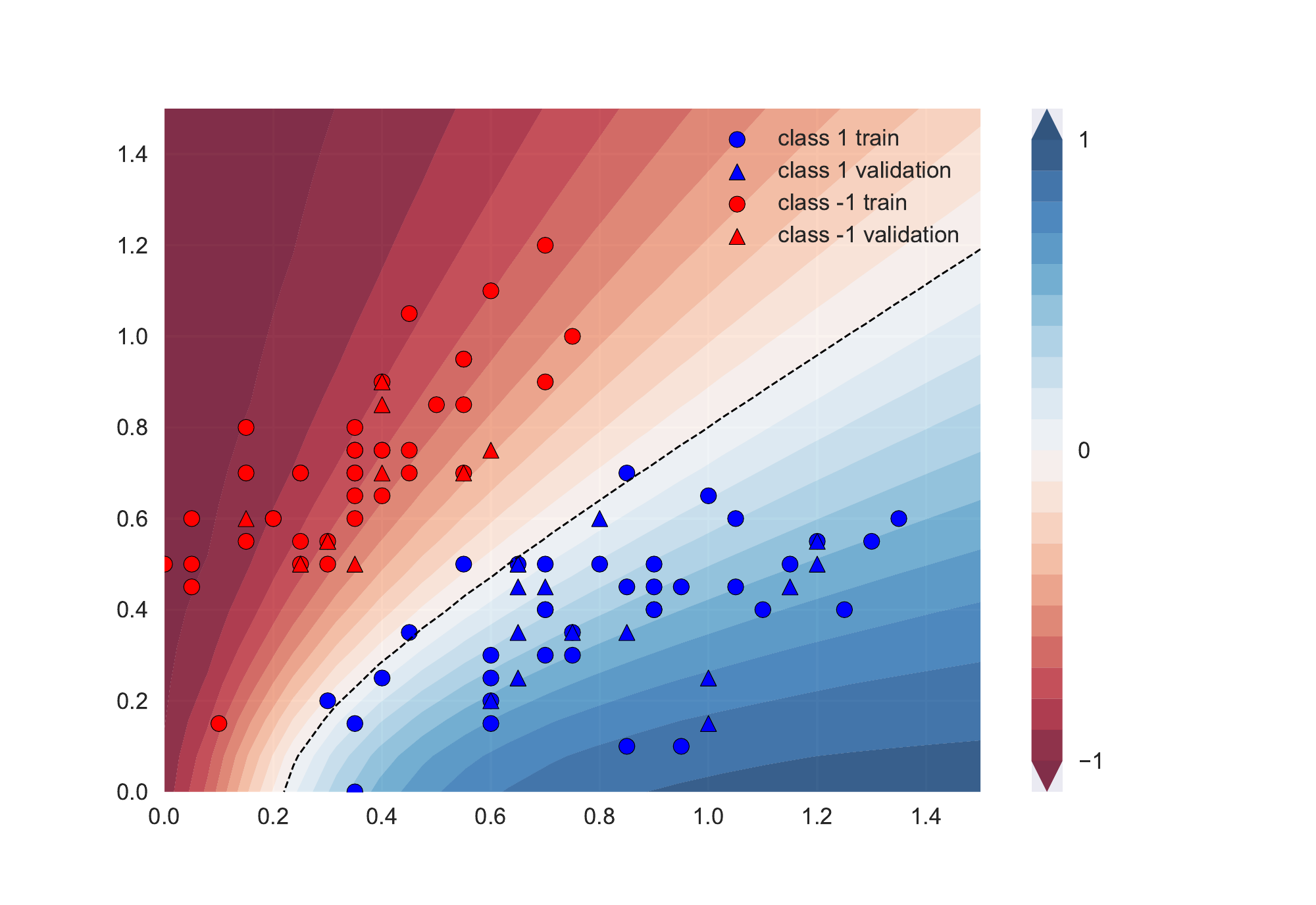}\label{fig:va:c}}
\subfigure[WS-SGD Warm-start with Multiple Random Sample]{\includegraphics[width=.5\columnwidth]{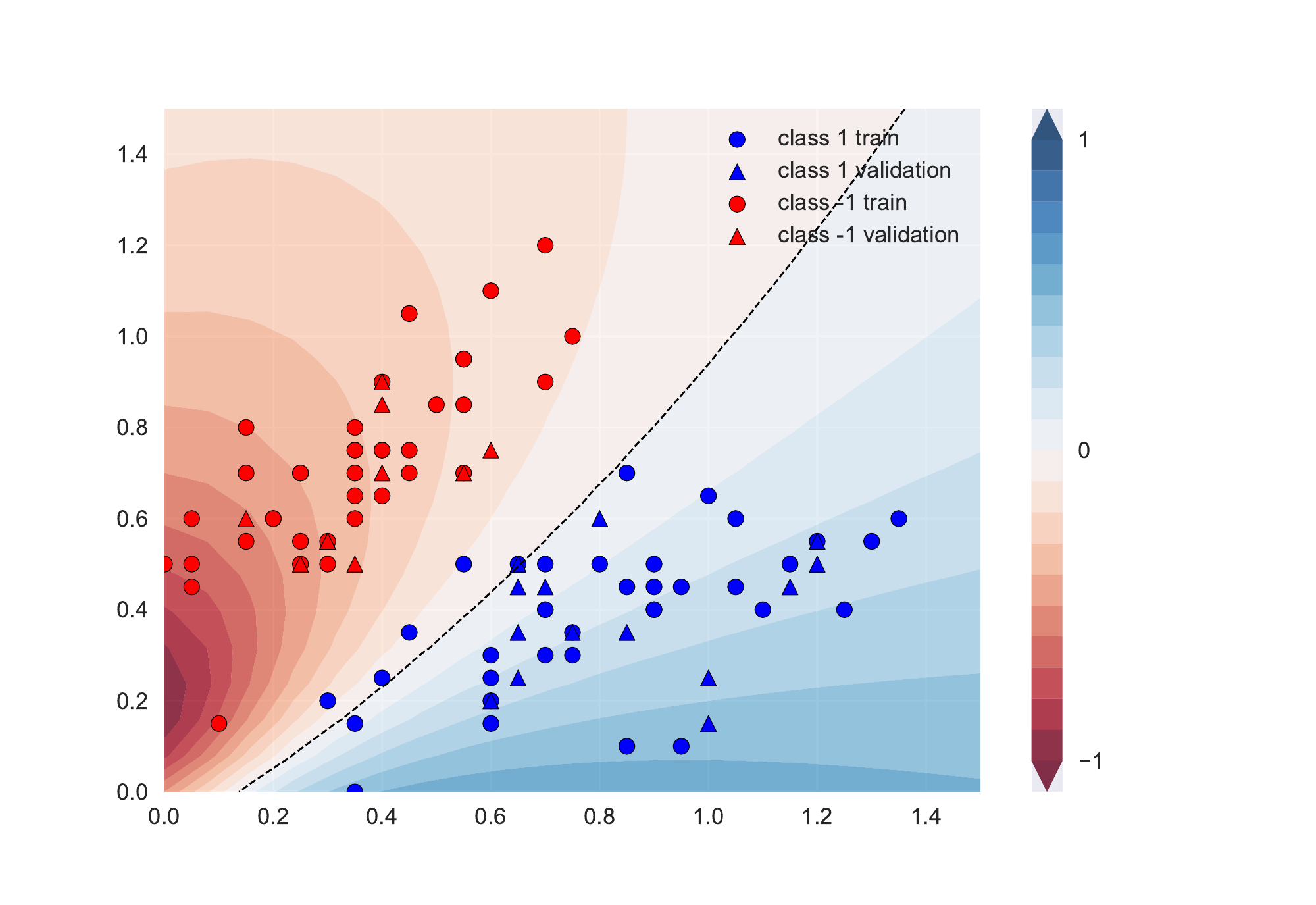}\label{fig:va:d}}
\caption{Variational classifier for Iris classification task: Variants of LAWS}
\label{fig:variants}
\end{figure*}

To evaluate the performance of LAWS, WS-SGD, and their variants, we use the open-source library PennyLane~\cite{bergholm2018pennylane} 0.22.2 built on Python 3.7.
Most of the experiments follow the open-source tutorials from official PennyLane website.
We first show that the LAWS could mitigate the BP issue with random PQC energy problems (in Figure~\ref{fig:randomPQC}).
Then we evaluate LAWS in a more practical problem of estimating the ground state energy of the hydrogen molecule  (in Figure~\ref{fig:HM}).
Further, we conduct experiments on variational quantum classifiers (in Figure~\ref{fig:class} and Figure~\ref{fig:variants}) such that the quantum circuits can be trained from labeled data to classify new data samples.
The classification training data is public and can be downloaded from the PennyLane tutorial.
All the experimental results and source code implementation can be found at \url{https://github.com/taozeyi1990/LAWS}.

\subsection{Evaluation on random PQC}
Figure~\ref{fig:randomPQC} above demonstrates the performance of LAWS compared to Vanilla SGD and Vanilla QNG in randomly designed PQC.
The random PQC is shown on (a), where it contains 3 qubits and 4 trainable parameters.
We perform the optimization for a total of 400 iterations.
It is undeniable that LAWS outperforms Vanilia QNG and Vanilla SGD, as shown on (b).
We divide the cost value region into 3 parts marked as blue~\ding{182}, red~\ding{183}, and green~\ding{184}.
In region~\ding{182}, all methods are quickly converging, and LAWS is even faster.
In region~\ding{183}, methods are trapped because of BP, and there is no further cost value decreasing.
Finally, in region~\ding{184}, LAWS and its variant mitigate the BP and find the optimal solutions.

Figure~\ref{fig:randomPQC} (c) and (d) are showing the different configurations of LAWS for different $\Delta$'s and look around step size $\eta$'s.
We see that the LAWS is sensitive in the change of $\eta$.

\subsection{Evaluation on Quantum Chemistry}
Figure~\ref{fig:HM} shows the experimental results for a more practical problem of Hydrogen VQE.
The primary purpose of this experiment is to approach ground state energy as close as we can.
The exact value of the ground state energy defined by the above PQC is given by $-1.1361894$.
The qubit register has been initialized to $|1100\rangle$, which encodes for the Hartree-Fock state of the hydrogen molecule described on a minimal basis.
We see the LAWS still outperforms than other state-of-the-art optimization methods.

\subsection{QNNs: Variational Classifier}
Last but least, we perform the binary Iris classification task, which is a simple but powerful QNN to show that the warm-start strategy has better generalization ability.
The learning rate for SGD and Nesterov momentum optimizer is set to be 0.01.
While the learning rate, look around rate and look around steps are 0.01, 0.5, and 5, respectively.
We train QNN model within 50 iterations.
Figure~\ref{fig:class:a},~\ref{fig:class:b}, and~\ref{fig:class:c}
show the cost function value, training accuracy, and validation accuracy, respectively.
As shown in each figure, the warm-start SGD in green demonstrates its superiority in this task.
Figure~\ref{fig:class:d},~\ref{fig:class:e}, and~\ref{fig:class:f} indicate the decision boundaries of model trained with different optimizers.
We observe that the two classes in the train and validation dataset are perfectly being separated when using the warm-start optimizer, which indicates the WSSGD has stronger generalization ability.
The result of the Nesterov optimizer seems to suffer from the under-fitted where the samples at the bottom left are mixed.

Moreover, we evaluate the different warm-start strategies discussed in section~\ref{subsec:implementation} of WSSGD in Figure~\ref{fig:variants}.
We test the impacts on WSSGD of different configurations such as sampling batch size.

\section{Conclusion}
In this work, we propose an unified framework for QNG by using a classical first-order optimization scheme.
The proposed new algorithm named WSSGD show its power in QVA learning. 
Our experiment results show that the proposed algorithm could mitigate the BP issue and have better generalization ability in quantum classification problems.

\bibliographystyle{IEEEtran}
\bibliography{qce22}

\end{document}